\newcommand{\bbR}{\mathbb{R}}
\newcommand{\bbZ}{\mathbb{Z}}
\newcommand{\bbN}{\mathbb{N}}
\newcommand{\calA}{\mathcal{A}}
\newcommand{\calC}{\mathcal{C}}
\newcommand{\calD}{\mathcal{D}}
\newcommand{\calI}{\mathcal{I}}
\newcommand{\calX}{\mathcal{X}}
\newcommand{\msucceq}{\mathop{\stackrel{m}{\succeq}}}
\newcommand{\ceil}[1]{\lceil #1 \rceil }
\newcommand{\cmark}{\ding{51}}%
\newtheorem{example}{Example}
\newtheorem{theorem}{Theorem}
\newtheorem{lemma}{Lemma}
\newtheorem{corollary}{Corollary}
\newcounter{Bew1}
\newcounter{Bew2}
\title{Fair Division of Time: Multi-layered Cake Cutting}
\author{
Hadi Hosseini$^1$\and
Ayumi Igarashi$^2$\and
Andrew Searns$^1$
\affiliations
$^1$Rochester Institute of Technology, US\\
National Institute of Informatics, Japan
\emails
hhvcs@rit.edu,
ayumi\_igarashi@nii.ac.jp,
abs2157@rit.edu
}
\begin{document}

\maketitle

\begin{abstract}
We initiate the study of multi-layered cake cutting with the goal of fairly allocating multiple divisible resources (layers of a cake) among a set of agents. The key requirement is that each agent can only utilize a single resource at each time interval. Several real-life applications exhibit such restrictions on overlapping pieces; for example, assigning time intervals over multiple facilities and resources or assigning shifts to medical professionals. We investigate the existence and computation of envy-free and proportional allocations. We show that envy-free allocations that are both feasible and contiguous are guaranteed to exist for up to three agents with two types of preferences, when the number of layers is two. We also show that envy-free feasible allocations where each agent receives a polynomially bounded number of intervals exist for any number of agents and layers under mild conditions on agents' preferences. We further devise an algorithm for computing proportional allocations for any number of agents and layers. 
\end{abstract}

\section{Introduction}
Consider a group of students who wish to use multiple college facilities such as a conference room and an exercise room over different periods of time.
Each student has a preference over what facility to use at different time of the day: Alice prefers to set her meetings in the morning and exercise in the afternoon, whereas Bob prefers to start the day with exercising for a couple of hours and meet with his teammates in the conference room for the rest of the day.

The fair division literature has extensively studied the problem of dividing a heterogeneous divisible resource (aka a \textit{cake}) among several agents who may have different preference over the various pieces of the cake~\citep{Steinhaus48,Robertson98,Brams06}.
These studies have resulted in a plethora of axiomatic and existence results~\citep{barbanel2005geometry,moulin2004fair} as well as computational solutions~\citep{procaccia2013cake,aziz2016discretefour} under a variety of assumptions, and were successfully implemented in practice (see~\citep{procaccia_moulin_2016,Brams96} for an overview).
In the case of Alice and Bob, each facility represents a layer of the cake in a \textit{multi-layered cake cutting} problem, and the question is how to allocate the time intervals (usage right) of the facilities according to their preferences in a fair manner.

One naive approach is to treat each cake independently and solve the problem through well-established cake-cutting techniques by performing a fair division on each layer separately.
However, this approach has major drawbacks: First, the final outcome, although fair on each layer, may not necessarily be fair overall. Second, the allocation may not be feasible, i.e., it may assign two overlapping pieces (time intervals) to a single agent. 
In our example, Alice cannot simultaneously utilize the exercise room and the conference room at the same time if she receives overlapping  intervals.
Several other application domains exhibit similar structures over resources: assigning nurses to various wards and shifts, doctors to operation rooms, and research equipment to groups, to name a few.

In multi-layared cake cutting, each layer represents a divisible resource. Each agent has additive preferences over every disjoint (non-overlapping) intervals. A division of a multi-layered cake is \emph{feasible} if no agent's share contains overlapping intervals, and is contiguous if each allocated piece of a layer is contiguous. 
There has been some recent work on dividing multiple cakes among agents~\citep{cloutier2010two,lebert2013envy}. Yet, none of the previous work considered the division of multiple resources under feasibility and contiguity constraints. 
In this paper, we thus ask the following question: 
\begin{quote}
\textit{What fairness guarantees can be achieved under feasibility and contiguity constraints for various number of agents and layers?}
\end{quote}

\subsection{Our Results}
We initiate the study of the multi-layered cake cutting problem for allocating divisible resources, under contiguity and feasibility requirements. Our focus is on two fairness notions, \textit{envy-freeness} and \textit{proportionality}. Envy-freeness (EF) requires that each agent believes no other agent's share is better than its share of the cake. Proportionality (Prop) among $n$ agents requires that each agent receives a share that is valued at least $\frac{1}{n}$ of the value of the entire cake.
For efficiency, we consider \textit{complete} divisions with no leftover pieces.

Focusing on envy-free divisions, we show the existence of envy-free and complete allocations that are both feasible and contiguous for two-layered cakes and up to three agents with at most two types of preferences. These cases are particularly appealing since many applications often deal with dividing a small number of resources among few agents (e.g. assigning meeting rooms). Turning our attention to the case when the contiguity requirement is dropped, we then show that envy-free feasible allocations exist for any number $n$ of agents and any number $m$ of layers with $m \leq n$, under mild conditions on agents' preferences.
We further show that proportional complete allocations that are both feasible and contiguous exist when the number of layers isa power of two. 
Subsequently, we show that although this result cannot be immediately extended to any number of agents and layers, a proportional complete allocation that is feasible exists when the number of layers is at most the number of agents, and can be computed efficiently. 

\begin{table}[t]
\small 
\centering
\begin{tabular}{@{}lllll@{}}
\toprule
Agents ($n$)            & Layers ($m$)                         & EF & Prop &  \\ \midrule
2               & 2                         &  \cmark (Thm. \ref{thm:EF:two} ) &   \cmark (Thm. \ref{thm:exponential})    &  \\
3               & 2                         &  \cmark (Thm. \ref{thm:EF:any}$^\diamondsuit\dagger$) &    \cmark (Thm. \ref{thm:exponential})  &  \\
any $n\geq m$              & $2^{a}$,   \scriptsize{$a\in\bbZ_{+}$}   &   \cmark (Thm. \ref{thm:EF:any}$^\diamondsuit\dagger$)  &   \cmark (Thm. \ref{thm:exponential})   &  \\ 
any $n\geq m$  & any $m$ &  \cmark (Thm. \ref{thm:EF:any}$^\diamondsuit\dagger$)  &   \cmark (Thm. \ref{thm:prop:feasible:any}$^\diamondsuit$)   &  \\ \bottomrule
\end{tabular}
\caption{The overview of our results. $\dagger$ assumes continuity of value density functions. $\diamondsuit$ indicates that existence holds without contiguity requirement. Note that when $m> n$, no complete and feasible (non-overlapping) solution exists.}
\end{table}

\subsection{Related Work}
In recent years, cake cutting has received significant attention in artificial intelligence and economics as a metaphor for algorithmic approaches in achieving fairness in allocation of resources \citep{procaccia2013cake,branzei2019communication,kurokawa2013cut,aziz2016discrete}. 
Recent studies have focused on the fair division of resources when agents have requirements over multiple resources that must be simultaneously allocated in order to carry out certain tasks (e.g. CPU and RAM) \citep{Ghodsi:2011:DRF:1972457.1972490,gutman2012fair,parkes2015beyond}. 
The most relevant work to ours is the envy-free multi-cake fair division that considers dividing multiple cakes among agents with linked preferences over the cakes. Here, agents can simultaneously benefit from all allocated pieces with no constraints. They show that envy-free divisions with only few cuts exist for two agents and many cakes, as well as three agents and two cakes~\citep{cloutier2010two,lebert2013envy,nyman2020fair}. In contrast, a multi-layered cake cutting requires non-overlapping pieces. Thus, \cite{cloutier2010two}'s generalized envy-freeness notion on multiple cakes does not immediately imply envy-freeness in our setting and no longer induces a feasible division.

\section{Our Model}
Our setting includes a set of {\em agents} denoted by $N=[n]$, a set of {\em layers} denoted by $L=[m]$, where for a natural number $s \in \bbN$, $[s]=\{1,2,\ldots,s\}$.
Given two real numbers $x,y \in \bbR$, we write $[x,y]=\{\, z \in \bbR \mid x \le z \le y \,\}$ to denote an interval. We denote by $\bbR_{+}$ (respectively $\bbZ_{+}$) the set of non-negative reals (respectively, integers) including $0$. 
A {\em piece} of cake is a finite set of disjoint subintervals of $[0,1]$. We say that a subinterval of $[0,1]$ is a {\em contiguous piece} of cake. An {\em $m$-layered cake} is denoted by $\calC=(C_j)_{j \in L}$ where $C_j \subseteq [0,1]$ is a contiguous piece for $j \in L$. We refer to each $j \in L$ as $j$-th {\em layer} and $C_j$ as $j$-th {\em layered cake}. 

Each agent $i$ is endowed with a non-negative {\em integrable density function} $v_{ij}:C_j \rightarrow \bbR_{+}$. For a given piece of cake $X$ of $j$-th layer, $V_{ij}(X)$ denotes the value assigned to it by agent $i$, i.e., $V_{ij}(X)=\sum_{I \in X}\int_{x \in I} v_{ij}(x) dx$. These functions are assumed to be {\em normalized} over layers: for each $i \in N$, $\sum_{j \in L}V_{ij}(C_j)=1$. A {\em layered piece} is a sequence $\calX=(X_j)_{j \in L}$ of pieces of each layer $j \in L$; a layered piece is said to be {\em contiguous} if each $X_j$ is a contiguous piece of each layer. 
We assume valuation functions are \emph{additive} on layers and write $V_{i}(\calX)=\sum_{j \in L}V_{ij}(X_{j})$.

A layered contiguous piece is said to be {\em non-overlapping} if no two pieces from different layers overlap, i.e, for any pair of distinct layers $j,j' \in L$ and for any $I \in X_j$ and $I' \in X_{j'}$, $I \cap I'=\emptyset$. For two layered pieces $\calX$ and $\calX'$, we say that agent $i$ {\em weakly prefers} $\calX$ to $\calX'$ if $V_i(\calX) \geq V_i(\calX')$. 

A {\em multi-allocation} $\calA=(\calA_1,\calA_2,\ldots,\calA_n)$ is a partition of the $m$-layered cake $\calC$ where each $\calA_i=(A_{ij})_{j \in L}$ is a layered piece of the cake allocated to agent $i$; we refer to each $\calA_i$ as a {\em bundle} of $i$. For a multi-allocation $\calA$ and $i \in N$, we write $V_{i}(\calA_i)=\sum_{j \in L}V_{ij}(A_{ij})$ to denote the value of agent $i$ for $\calA_i$. A {\em multi-allocation} $\calA$ is said to be 
\begin{itemize}
\item {\em contiguous} if each $\calA_i$ for $i \in N$ is contiguous; 
\item {\em feasible} if each $\calA_i$ for $i \in N$ is non-overlapping.
\end{itemize}

We focus on \emph{complete} multi-allocations where the entire cake must be allocated.
Notice that some layers may be disjoint (see Figure \ref{fig:server}), and the number of agents must exceed the number of layers, i.e. $n \geq m$; otherwise the multi-allocation will contain overlapping pieces. We illustrate our model in the following example. 

\begin{example}[Resource sharing]
Suppose that there are three meeting rooms $r_1$, $r_2$, and $r_3$ with different capacities, and three researchers Alice, Bob, and Charlie. The first room is available all day, the second and the third rooms are only available in the morning and late afternoon, respectively (see Fig.~\ref{fig:server}). Each researcher has a preference over the access time to the shared rooms. For example, Alice wants to have a group meeting in the larger room in the morning and then have an individual meeting in the smaller one in the afternoon. 
\end{example}

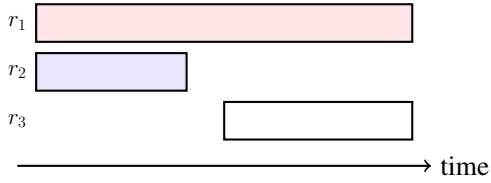
\begin{figure}[hbt]
\centering
\begin{tikzpicture}[scale=0.5, transform shape]
\draw[fill=red!10, thick] (0,1.3) rectangle (10,2.3);
\draw[fill=blue!10, thick] (0,0) rectangle (4,1); 
\draw[thick] (5,-1.3) rectangle (10,-0.3);

\node at (-0.5,1.8) {\huge $r_1$};
\node at (-0.5,0.5) {\huge $r_2$};
\node at (-0.5,-0.8) {\huge $r_3$};

\draw[thick,->] (-0.5,-2) -- (10.5,-2);
\node at (11.4,-2) {\Huge time};

\end{tikzpicture}
\caption{Example of a multi-layered cake. There are three meeting rooms $r_1$, $r_2$, and $r_3$ with different capacities, shared among several research groups.}
\label{fig:server}
\end{figure}

\paragraph{Fairness.}
A multi-allocation is said to be {\em envy-free} if no agent {\em envies} the others, i.e., $V_{i}(\calA_i) \ge V_{i}(\calA_{i'})$ for any pair of agents $i,i' \in N$. A multi-allocation is said to be {\em proportional} if each agent gets his {\em proportional fair share}, i.e., $V_{i}(\calA_i) \ge \frac{1}{n}$ for any $i \in N$. The following implication, which is well-known for the standard setting, holds in our setting as well.

\begin{lemma}\label{lem:propEF}
An envy-free complete multi-allocation satisfies proportionality. 
\end{lemma}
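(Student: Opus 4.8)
The plan is to mimic the classical cake-cutting argument that envy-freeness implies proportionality, adapting it to the additive-over-layers valuations defined in the model. Fix an arbitrary agent $i \in N$ and let $\calA$ be an envy-free complete multi-allocation. The first step is to apply the envy-freeness inequality $V_i(\calA_i) \geq V_i(\calA_{i'})$ to every agent $i' \in N$ (including $i' = i$, where it holds with equality) and sum these $n$ inequalities, which yields $n \cdot V_i(\calA_i) \geq \sum_{i' \in N} V_i(\calA_{i'})$.

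The second step, and the only place that requires genuine care, is to evaluate the right-hand side as the total value that agent $i$ assigns to the entire $m$-layered cake and to show that this equals $1$. Expanding via the definition $V_i(\calA_{i'}) = \sum_{j \in L} V_{ij}(A_{i'j})$ and exchanging the order of summation gives $\sum_{i' \in N} V_i(\calA_{i'}) = \sum_{j \in L} \sum_{i' \in N} V_{ij}(A_{i'j})$. At this point I would invoke completeness: since $\calA$ partitions $\calC$, for each fixed layer $j$ the pieces $(A_{i'j})_{i' \in N}$ partition the layered cake $C_j$ (any overlaps can occur only on a finite set of interval endpoints, which have measure zero and hence do not affect the integrals defining $V_{ij}$). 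By additivity of $V_{ij}$ over disjoint subintervals, the inner sum collapses to $V_{ij}(C_j)$, so the whole expression becomes $\sum_{j \in L} V_{ij}(C_j)$, which equals $1$ by the normalization assumption $\sum_{j \in L} V_{ij}(C_j) = 1$.

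Combining the two steps gives $n \cdot V_i(\calA_i) \geq 1$, that is, $V_i(\calA_i) \geq \frac{1}{n}$. Since $i$ was arbitrary, every agent receives at least its proportional fair share, so $\calA$ is proportional.

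I do not anticipate a substantive obstacle: the argument is essentially the single-layer proof run inside the additive-over-layers structure. The only point demanding attention is the bookkeeping in the second step, namely confirming that summing the per-agent, per-layer values recovers exactly the normalized total of $1$; this hinges on completeness (to obtain a partition of each $C_j$) together with the per-agent normalization over layers. The feasibility and contiguity properties of $\calA$ play no role in this implication and can be safely ignored.
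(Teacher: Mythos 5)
Your proposal is correct and follows exactly the paper's argument: sum the envy-freeness inequalities over all agents, then use completeness, additivity, and the normalization $\sum_{j \in L} V_{ij}(C_j)=1$ to conclude that $\sum_{i' \in N} V_i(\calA_{i'})=1$ and hence $V_i(\calA_i) \geq \frac{1}{n}$. The paper's proof is just a terser version of the same computation, so no further comparison is needed.
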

\begin{proof}
Consider an envy-free complete multi-allocation $\calA_i=(A_{ij})_{j \in L}$ and an agent $i \in N$. By envy-freeness, we have that $V_{i}(\calA_i) \geq V_{i}(\calA_j)$ for any $j \in N$. Summing over $j \in N$, we get $V_{i}(\calA_i) \geq \frac{1}{n}\sum_{j \in N}V_{i}(\calA_j)=\frac{1}{n}$ by additivity. 
\end{proof}

\paragraph{The $m$-layered cuts.}
In order to cut the layered cake while satisfying the non-overlapping constraint, we define a particular approach for partitioning the entire cake into diagonal pieces. Consider the $m$-layered cake $\calC$ where $m$ is an even number. For each point $x$ of the interval $[0,1]$, we define
\begin{itemize}
\item $LR(x,\calC)=(\bigcup^{\frac{m}{2}}_{j=1}C_j \cap [0,x]) \cup (\bigcup^{m}_{j=\frac{m}{2}+1}C_j \cap [x,1])$; 
\item $RL(x,\calC)=(\bigcup^{\frac{m}{2}}_{j=1}C_j \cap [x,1]) \cup (\bigcup^{m}_{j=\frac{m}{2}+1}C_j \cap [0,x])$. 
\end{itemize}
$LR(x,\calC)$ consists of the top-half subintervals of points left of $x$ and the lower-half subintervals of points right of $x$; similarly, $RL(x,\calC)$ consists of the top-half subintervals of points right of $x$ and the lower-half subintervals of points left of $x$ (Fig. \ref{fig:LR:RL}). We abuse the notation and write $LR(x)=LR(x,\calC)$ and $RL(x)=RL(x,\calC)$ if $\calC$ is clear from the context. 

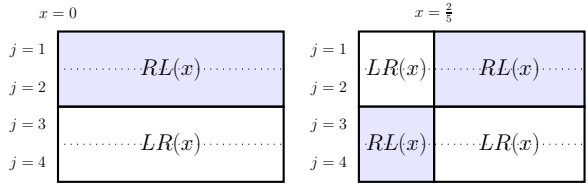
\begin{figure}[ht]
\centering
\begin{tikzpicture}[scale=0.5, transform shape]

\draw[fill=blue!10, thick] (0,0) rectangle (6,2); 
\draw[thick] (0,-2) rectangle (6,0);
\draw[dotted] (0,1) -- (6,1);
\draw[dotted] (0,-1) -- (6,-1);

\node[thick] at (0,2.5) {\Large $x=0$};
\node at (3,-1) {\huge $LR(x)$};
\node at (3,1) {\huge $RL(x)$};

\node at (-0.8,1.5) {\Large $j=1$};
\node at (-0.8,0.5) {\Large $j=2$};
\node at (-0.8,-0.5) {\Large $j=3$};
\node at (-0.8,-1.5) {\Large $j=4$};

\begin{scope}[xshift=8cm]

\draw[thick] (0,0) rectangle (2,2);
\draw[fill=blue!10, thick] (0,-2) rectangle (2,0); 
\draw[fill=blue!10, thick] (2,0) rectangle (6,2); 
\draw[thick] (2,-2) rectangle (6,0);
\draw[dotted] (0,1) -- (6,1);
\draw[dotted] (0,-1) -- (6,-1);

\node[thick] at (2.0,2.5) {\bf \Large $x=\frac{2}{5}$};
\node at (1,1) {\huge $LR(x)$};
\node at (4,-1) {\huge $LR(x)$};
\node at (4,1) {\huge $RL(x)$};
\node at (1,-1) {\huge $RL(x)$};

\node at (-0.8,1.5) {\Large $j=1$};
\node at (-0.8,0.5) {\Large $j=2$};
\node at (-0.8,-0.5) {\Large $j=3$};
\node at (-0.8,-1.5) {\Large $j=4$};

\end{scope}
 
\end{tikzpicture}
\caption{Examples of the partitions induced by $x=0$ and $x=\frac{2}{5}$ for a {\bf four-layered} cake.}
\label{fig:LR:RL}
\end{figure}

\paragraph{Computational model.}
Following the standard {\em Robertson-Webb Model} \citep{Robertson98}, we introduce two types of queries: those for a cake on each layer (called a {\em short knife}) and those for the entire cake (called a {\em long knife}). 

\paragraph{Short knife.} Short eval query: given an interval $[x,y]$ of the $j$-th layered cake $C_j$, $eval_j(i,x,y)$ asks agent $i$ for its value $[x,y]$, i.e., $V_{ij}([x,y])$.
 Short cut query: given a point $x$ and $r \in [0,1]$, $cut_j(i,x,r)$ asks agent $i$ for the minimum point $y$ such that $V_{ij}([x,y])=r$.

\paragraph{Long knife.} Long eval query: given a point $x$, $eval(i,x)$ asks agent $i$ for its value $LR(x)$, i.e., $V_{i}(LR(x))$. 
Long cut query: given $r \in [0,1]$, $cut(i,r)$ asks agent $i$ for the minimum point $x$ such that $V_{i}(LR(x))=r$ if such point $x$ exists.

\section{Existence of a switching point}
We start by showing the existence of a point $x$ that equally divides the entire cake into two pairs of diagonal pieces, both for the individuals and for the majority; these will serve as a fundamental property in our problem. 
We say that $x \in [0,1]$ is a {\em switching point} for agent $i$ over a layered cake $\calC$ if $V_i(LR(x))=V_i(RL(x))$. 

\begin{lemma}\label{lem:switching}
Suppose that the number $m$ of layers is even. Take any $i \in N$. Let $r\in \bbR$ be such that $($i$)$ $V_i(LR(0)) \geq r$ and $V_i(RL(0)) \leq r$, or $($ii$)$ $V_i(LR(0)) \leq r$ and $V_i(RL(0)) \geq r$. 
Then, there exists a point $x \in [0,1]$ such that $i$ values $LR(x)$ exactly at $r$, i.e. $V_i(LR(x)) = r$. In particular, a switching point for $i$ always exists. 
\end{lemma}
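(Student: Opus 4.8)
The plan is to view $V_i(LR(x))$ as a function of the single real variable $x$ and apply the Intermediate Value Theorem. Define $f:[0,1]\to\bbR_{+}$ by $f(x)=V_i(LR(x))$. First I would evaluate $f$ at the endpoints. At $x=0$ the top-half contributions $C_j\cap[0,0]$ all vanish, so $LR(0)=\bigcup_{j=\frac{m}{2}+1}^{m}C_j$ and $f(0)=V_i(LR(0))$. At $x=1$ the bottom-half contributions $C_j\cap[1,1]$ vanish, so $LR(1)=\bigcup_{j=1}^{\frac{m}{2}}C_j$, which is exactly $RL(0)$; hence $f(1)=V_i(RL(0))$.

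The key step is to establish that $f$ is continuous on $[0,1]$. Writing out the definition, $f(x)=\sum_{j=1}^{\frac{m}{2}}V_{ij}(C_j\cap[0,x])+\sum_{j=\frac{m}{2}+1}^{m}V_{ij}(C_j\cap[x,1])$, so each summand is an integral of the density $v_{ij}$ over an interval whose endpoint varies with $x$. Since each $v_{ij}$ is integrable, the map $t\mapsto\int_{0}^{t}v_{ij}$ is (absolutely) continuous, and therefore each summand—and hence the finite sum $f$—is continuous in $x$. This is the only place where the integrability assumption is genuinely used, and it is the step I expect to require the most care, since the densities are not assumed continuous.

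With continuity in hand the conclusion is immediate. Under hypothesis (i) we have $f(0)=V_i(LR(0))\ge r\ge V_i(RL(0))=f(1)$, and under hypothesis (ii) we have $f(0)\le r\le f(1)$; in either case $r$ lies between $f(0)$ and $f(1)$, so the Intermediate Value Theorem yields a point $x\in[0,1]$ with $f(x)=r$, as required.

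Finally, for the existence of a switching point I would specialize to $r=\frac{1}{2}$. Observe that for every $x$ the pieces $LR(x)$ and $RL(x)$ partition the whole cake: on each layer $j$ the two pieces are $C_j\cap[0,x]$ and $C_j\cap[x,1]$, which together cover $C_j$ and overlap only at the single point $x$. Hence $V_i(LR(x))+V_i(RL(x))=\sum_{j\in L}V_{ij}(C_j)=1$ by normalization. In particular $V_i(LR(0))+V_i(RL(0))=1$, so one of the two is $\ge\frac{1}{2}$ and the other $\le\frac{1}{2}$; thus hypothesis (i) or (ii) holds for $r=\frac{1}{2}$, and the point $x$ produced above satisfies $V_i(LR(x))=\frac{1}{2}=V_i(RL(x))$, i.e., it is a switching point.
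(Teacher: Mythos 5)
Your proposal is correct and follows essentially the same route as the paper: both define $f(x)=V_i(LR(x))$, note that $f(0)=V_i(LR(0))$ and $f(1)=V_i(RL(0))$, invoke continuity of $f$ to apply the Intermediate Value Theorem, and specialize to $r=\tfrac{1}{2}$ for the switching point. Your additional justifications (absolute continuity of $t\mapsto\int_0^t v_{ij}$ and the identity $V_i(LR(x))+V_i(RL(x))=1$ guaranteeing that $r=\tfrac{1}{2}$ satisfies hypothesis (i) or (ii)) are welcome elaborations of steps the paper leaves implicit.
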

\begin{proof}
Suppose without loss of generality that $V_i(LR(0)) \geq r$ and $V_i(RL(0)) \leq r$. Consider the function $f(x)=V_i(LR(x))$ for $x \in [0,1]$. Recall that $f(x)$ is a continuous function written as the sum of continuous functions:
$
f(x)=\sum^{\frac{m}{2}}_{j=1} V_{ij}(C_j \cap [0,x])+ \sum^{m}_{j=\frac{m}{2}+1} V_{ij}(C_j \cap [x,1]). 
$
Since $f(0) \geq r$ and $f(1) \leq r$, there is a point $x \in [0,1]$ with $f(x)=r$ by the intermediate value theorem, which proves the claim. Further, by taking $r=\frac{1}{2}$, the point $x$ where $V_i(LR(x))=\frac{1}{2}$ is a switching point for agent $i$. 
\end{proof}

We will generalize the notion of a switching point from the individual level to the majority. For layered contiguous pieces $\calI$ and $\calI'$, we say that the majority weakly prefer $\calI$ to $\calI'$ (denoted by $\calI \msucceq \calI'$) if there exists $S \subseteq N$ such that $|S| \geq \ceil{\frac{n}{2}}$ and each $i \in S$ weakly prefers $\calI$ to $\calI'$. We say that $x \in [0,1]$ is a {\em majority switching point} over $\calC$ if $LR(x) \msucceq RL(x)$ and $RL(x) \msucceq LR(x)$.
The following lemma guarantees the existence of a majority switching point, for any even number of layers and any number of agents. 

\begin{lemma}\label{lem:majority:switching}
Suppose that the number of layers, $m$, is even. Then, there exists a majority switching point for any number $n \geq m$ of agents. 
\end{lemma}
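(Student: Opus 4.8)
The plan is to reduce the majority condition to a single application of the intermediate value theorem, applied to a scalar function built from the order statistics of the agents' balance functions, exploiting an antisymmetry of the diagonal cuts at the two endpoints.

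First I would introduce, for each agent $i$, the \emph{balance function} $g_i(x) = V_i(LR(x)) - V_i(RL(x))$. Because $LR(x)$ and $RL(x)$ partition the entire layered cake on every layer, we have $V_i(LR(x)) + V_i(RL(x)) = 1$, so $g_i(x) = 2\,V_i(LR(x)) - 1$; this is continuous in $x$ by the same decomposition into continuous summands used in the proof of Lemma~\ref{lem:switching}. The key point is that the relation ``$i$ weakly prefers $LR(x)$ to $RL(x)$'' is exactly $g_i(x) \ge 0$, and the reverse preference is exactly $g_i(x) \le 0$. Thus a majority switching point is precisely a point $x$ at which at least $\ceil{\frac{n}{2}}$ of the $g_i$ are nonnegative \emph{and} at least $\ceil{\frac{n}{2}}$ of the $g_i$ are nonpositive.

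The structural observation I would then exploit is the endpoint antisymmetry: since $LR(1) = RL(0)$ (the full top half) and $RL(1) = LR(0)$ (the full bottom half), we get $g_i(1) = -g_i(0)$ for every $i$, so the multiset $\{g_i(1)\}_i$ is the negation of $\{g_i(0)\}_i$. Next I would pass to the order statistics $g_{(1)}(x) \le \cdots \le g_{(n)}(x)$, each of which is continuous (being a min--max of the continuous functions $g_i$), and set $\ell = \ceil{\frac{n}{2}}$ and $u = \floor{\frac{n}{2}}+1$. The index bookkeeping $n+1-\ell = u$ (valid for both parities of $n$) combined with the negation of the multiset yields $g_{(\ell)}(1) = -g_{(u)}(0)$ and $g_{(u)}(1) = -g_{(\ell)}(0)$.

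The crux is to define $\psi(x) = g_{(\ell)}(x) + g_{(u)}(x)$, which is continuous and satisfies $\psi(1) = -\psi(0)$; the intermediate value theorem then supplies a point $x^\ast$ with $\psi(x^\ast) = 0$. Since $g_{(\ell)} \le g_{(u)}$ always and they sum to zero at $x^\ast$, I obtain $g_{(\ell)}(x^\ast) \le 0 \le g_{(u)}(x^\ast)$. A short counting check finishes: $g_{(u)}(x^\ast) \ge 0$ forces the top $n-u+1 = \ceil{\frac{n}{2}}$ order statistics to be nonnegative, so at least $\ceil{\frac{n}{2}}$ agents weakly prefer $LR(x^\ast)$, giving $LR(x^\ast) \msucceq RL(x^\ast)$; symmetrically $g_{(\ell)}(x^\ast) \le 0$ forces the bottom $\ell = \ceil{\frac{n}{2}}$ to be nonpositive, giving $RL(x^\ast) \msucceq LR(x^\ast)$. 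Hence $x^\ast$ is a majority switching point.

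I expect the main obstacle to be precisely the passage from the individual guarantee of Lemma~\ref{lem:switching} to a simultaneous majority guarantee. The functions $g_i$ need not be monotone, so one cannot simply sort the agents' individual switching points; moreover, the naive ``median'' function is ambiguous when $n$ is even, and a single central order statistic does not satisfy a clean endpoint antisymmetry. Choosing the symmetric pair $g_{(\ell)}, g_{(u)}$ and combining it with the exact identity $g_i(1) = -g_i(0)$ is what lets one IVT argument work uniformly for all $n$; the remaining work, namely verifying continuity of the order statistics and the counting at $x^\ast$, is routine.
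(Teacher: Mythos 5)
Your proof is correct, and it takes a genuinely different route from the paper's. The paper argues topologically: it considers the set $M=\{x : LR(x) \msucceq RL(x)\}$, shows $M$ is closed via a subsequence-extraction argument over the finitely many agent subsets, and then shows that the supremum $t^*=\sup M$ must simultaneously satisfy the reverse majority preference by maximality. Your argument instead packages everything into a single application of the intermediate value theorem: the balance functions $g_i(x)=V_i(LR(x))-V_i(RL(x))$ satisfy the endpoint antisymmetry $g_i(1)=-g_i(0)$, so the symmetric pair of central order statistics $g_{(\ell)},g_{(u)}$ with $\ell=\ceil{\frac{n}{2}}$ and $u=\floor{\frac{n}{2}}+1$ satisfies $\psi(1)=-\psi(0)$ for $\psi=g_{(\ell)}+g_{(u)}$, and a zero of $\psi$ pins down $g_{(\ell)}\le 0\le g_{(u)}$, which is exactly the two-sided majority condition after the counting check. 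Your index bookkeeping ($n+1-\ell=u$, and $n-u+1=\ceil{\frac{n}{2}}$ agents above position $u$) is right for both parities of $n$. What your approach buys: it is more elementary and self-contained, avoids the compactness and subsequence machinery, and handles the endpoints and the even/odd cases uniformly without a separate maximality argument (the paper's step ``by the maximality of $t^*$, at least $\ceil{\frac{n}{2}}$ agents weakly prefer $RL(t^*)$'' itself quietly requires a right-limit continuity argument, which you bypass entirely). What the paper's approach buys is that it works directly with the preference relation $\msucceq$ and would survive in settings where preferences are continuous but not representable by a scalar balance function; in the additive setting of this paper that generality is not needed, and your argument is arguably the cleaner one.
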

\begin{proof}
Suppose without loss of generality that the majority of agents weakly prefer $LR(0)$ to $RL(0)$. Since $LR(0)=RL(1)$ and $RL(0)=LR(1)$, this means that by the time when the long knife reaches the right-most point, i.e., $x=1$, the majority preference switches. 

Formally, consider the following set of points $x \in [0,1]$ where the majority weakly prefer $LR(x)$ to $RL(x)$:  
\[
M:= \{\, x \in [0,1] \mid LR(x) \msucceq RL(x)\,\}. 
\]
We will first show that $M$ is a compact set. Clearly, $M$ is bounded. To show that $M$ is closed, consider an infinite sequence as follows $X=\{x_k\}_{k=1,2, \ldots} \subseteq M$ that converges to $x^*$. For each $k=1,2,\ldots$, we denote by $S_k$ the set of agents who weakly prefer $LR(x_k)$ to $RL(x_k)$; by definition, $|S_k| \geq \ceil{\frac{n}{2}}$. Since there are finitely many subsets of agents, there is one subset $S_k \subseteqq N$ that appears infinitely often; let $S^*$ be such subset and $\{x^*_k\}_{k=1,2, \ldots}$ be an infinite sub-sequence of $X$ such that for each $k$, each agent in $S^*$ weakly prefers $LR(x^*_k)$ to $RL(x^*_k)$. Since the valuations $V_i$ for $i \in S^*$ are continuous, each agent $i \in S^*$ weakly prefers $LR(x^*)$ to $RL(x^*)$ at the limit $x^*$, which implies that $x^* \in M$  and hence $M$ is closed. Now since $M$ is a compact set, the supremum $t^*=\sup M$  belongs to $M$. By the maximality of $t^*$, at least $\ceil{\frac{n}{2}}$ agents weakly prefer $RL(t^*)$ to $LR(t^*)$. Since $t^* \in M$, at least $\ceil{\frac{n}{2}}$ agents weakly prefer $LR(t^*)$ to $RL(t^*)$ as well. Thus, $t^*$ corresponds to a majority switching point. 
\end{proof}

\section{Envy-free multi-layered cake cutting}
First, we will look into the problem of obtaining complete envy-free multi-allocations, while satisfying non-overlapping constraints. When there is only one layer, it is known that an envy-free contiguous allocation exists for any number of agents under mild assumptions on agents' preferences \citep{Stromquist1980,Su1999}.
Given the contiguity and feasibility constraints, the question is whether it is possible to guarantee an envy-free division in the multi-layered cake-cutting model. 

\subsection{Two agents and two layers}
We answer the above question positively for a simple, yet important, case of two agents and two layers. The standard protocol that achieves envy-freeness for two agents is known as the {\em cut-and-choose} protocol: Alice divides the entire cake into two pieces of equal value. Bob selects his preferred piece over the two pieces, leaving the remainder for Alice. 

We extend this protocol to the multi-layered cake cutting using the notion of a switching point. Alice first divides the layered cake into two {\em diagonal pieces}: one that includes the top left and lower right parts and another that includes the top right and lower left parts of the cake. Our version of the cut-and-choose protocol is specified as follows:

\vspace{5pt}

\noindent\fbox{%
	\parbox{0.985\linewidth}
	   {%
		\textbf{Cut-and-choose protocol for $n=2$ agents} over a two-layered cake $\calC$: \\
		\textit{Step 1.} Alice selects her switching point $x$ over $\calC$.\\
		\textit{Step 2.} Bob chooses a weakly preferred layered contiguous piece among $LR(x)$ and $RL(x)$. \\
		\textit{Step 3.} Alice receives the remaining piece.
}%
}
\begin{figure}[htb]
\centering
\begin{tikzpicture}[scale=0.7, transform shape]
\draw[thick] (0,0) rectangle (3,1);
\draw[fill=blue!10, thick] (3,0) rectangle (10,1); 
\draw[fill=blue!10, thick] (0,-1) rectangle (3,0); 
\draw[thick] (3,-1) rectangle (10,0); 
\node at (3.0,1.3) {\large $x$};
\node at (1.5,0.5) {\large $LR(x)$};
\node at (6.5,-0.5) {\large $LR(x)$};
\node at (1.5,-0.5) {\large $RL(x)$};
\node at (6.5,0.5) {\large $RL(x)$};
\end{tikzpicture}
\caption{Cut-and-Choose for two-layered cake}
\label{fig:EF:two}
\end{figure}

\begin{theorem}\label{thm:EF:two}
The cut-and-choose protocol yields a complete envy-free multi-allocation that is feasible and contiguous for two agents and a two-layered cake using $O(1)$ number of long eval and cut queries.
\end{theorem}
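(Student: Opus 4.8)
The plan is to verify that the cut-and-choose protocol produces a multi-allocation satisfying each of the four required properties---completeness, feasibility, contiguity, and envy-freeness---and then to bound the query complexity. The key structural observation is that for any switching point $x$, the two diagonal pieces $LR(x)$ and $RL(x)$ together partition the entire two-layered cake: every point of $[0,1]$ in each layer is assigned to exactly one of the two pieces. This single fact will immediately deliver completeness, and it also guarantees that the allocation is well-defined once Bob has chosen.

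First I would establish feasibility and contiguity, which come essentially for free from the construction. Since $m=2$, the piece $LR(x)$ consists of the portion of layer $1$ on $[0,x]$ together with the portion of layer $2$ on $[x,1]$; these two intervals meet only at the single point $x$ (a measure-zero overlap, so the value is unaffected) and otherwise do not overlap, so $LR(x)$ is non-overlapping, and symmetrically for $RL(x)$. Hence whichever of the two pieces each agent receives is feasible. Contiguity holds because each of $LR(x)$ and $RL(x)$ restricts to a single subinterval of each layer by construction, so every bundle is a contiguous layered piece.

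Next I would argue envy-freeness. By Lemma~\ref{lem:switching}, taking $r=\frac{1}{2}$, Alice's switching point $x$ satisfies $V_{\text{Alice}}(LR(x)) = V_{\text{Alice}}(RL(x))$, and since the two pieces partition the whole cake with $V_{\text{Alice}}(LR(x)) + V_{\text{Alice}}(RL(x)) = V_{\text{Alice}}(\calC) = 1$, each piece is worth exactly $\frac{1}{2}$ to Alice. Therefore Alice is indifferent between the two pieces and does not envy Bob regardless of Bob's choice. Bob, choosing in Step~2 a weakly preferred piece between $LR(x)$ and $RL(x)$, by definition values his own bundle at least as much as the one he leaves behind, so Bob does not envy Alice. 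This gives envy-freeness for both agents, and by Lemma~\ref{lem:propEF} proportionality follows as a bonus.

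Finally I would bound the query complexity: Step~1 uses a single long cut query $cut(\text{Alice}, \frac{1}{2})$ to locate the switching point, and Step~2 uses a constant number of long eval queries---evaluating $V_{\text{Bob}}(LR(x))$ suffices, since the complementary value is $1$ minus this---so the protocol runs in $O(1)$ long eval and cut queries. I do not anticipate a genuine obstacle here; the main thing to be careful about is confirming that $LR(x)$ and $RL(x)$ truly partition the cake (so that the remaining piece handed to Alice in Step~3 is exactly the complement of Bob's choice), and that the shared boundary point $x$ does not create a real overlap that would break feasibility. Both are routine once the definitions of $LR$ and $RL$ are unwound for $m=2$.
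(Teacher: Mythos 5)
Your proposal is correct and follows essentially the same route as the paper's proof: feasibility, contiguity, and completeness follow from the structure of the diagonal pieces $LR(x)$ and $RL(x)$, Bob does not envy because he picks his weakly preferred piece, and Alice does not envy by the definition of her switching point. The paper states these points more tersely, but your added care about the partition property, the measure-zero boundary at $x$, and the explicit query count is just a fuller write-up of the same argument.
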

\begin{proof}
It is immediate to see that the protocol returns a complete multi-allocation where each agent is assigned to a non-overlapping layered contiguous piece. The resulting allocation satisfies envy-freeness: Bob does not envy Alice since he chooses a preferred piece among $LR(x)$ and $RL(x)$. Alice does not envy Bob by the definition of a switching point. 
\end{proof}

As we noted in Section $2$, the existence result for two agents does not extend beyond two layers: if there are at least three layers, there is no feasible multi-allocation that completely allocates the cake to two agents. 

\subsection{Three agents and two layers}
We move on to the case of three agents and two layers. We will design a variant of Stromquist's protocol that achieves envy-freeness for one-layered cake \citep{Stromquist1980}: The referee moves two knives: a short knife and a long knife. The short knife points to the point $y$ and moves from left to right over the top layer, gradually increasing the left-most top piece (denoted by $Y$). The long knife keeps pointing to the point $x$, which can partition the remaining cake, denoted by $\calC^{-y}$, into two diagonal pieces $LR(x)$ and $RL(x)$ in an envy-free manner. Each agent shouts when the left-most top piece $Y$ becomes at least as highly valuable as the preferred piece among $LR(x)$ and $RL(x)$. Some agent, say $s$, shouts eventually (before the left-most top piece becomes the top layer), assuming that there is at least one agent who weakly prefers the top layer to the bottom layer. We note that $x$ may be positioned left to $y$; see Figure \ref{fig:EF:three} for some possibilities of the long knife's locations. 

We will show that the above protocol works, for a special case when there are at most two types of preferences: In such cases, the majority switching points coincide with the switching points of an agent with the majority preference.

\begin{lemma}\label{lem:switching:identical}
Suppose that $m=2$, $n=3$, and there are two different agents $i,j \in N$ with the same valuation $V$. Then, $x$ is a majority switching point over $\calC$ if and only if $x$ is a switching point for $i$. 
\end{lemma}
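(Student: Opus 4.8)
The plan is to unpack the two definitions and exploit the fact that agents $i$ and $j$ are ``twins'' sharing the common valuation $V$, so that every comparison either of them makes between $LR(x)$ and $RL(x)$ is governed solely by the sign of $V(LR(x)) - V(RL(x))$. In particular, $x$ is a switching point for $i$ exactly when $V(LR(x)) = V(RL(x))$, and the identical equation characterizes a switching point for $j$; so the two notions coincide. Since $n = 3$, the majority threshold is $\ceil{\frac{n}{2}} = 2$, and hence $x$ is a majority switching point precisely when at least two agents weakly prefer $LR(x)$ to $RL(x)$ \emph{and} at least two weakly prefer $RL(x)$ to $LR(x)$. Let $k$ denote the third agent, so that $N = \{i,j,k\}$.

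For the direction from a switching point to a majority switching point, I would argue directly. If $x$ is a switching point for $i$, then $V(LR(x)) = V(RL(x))$, so both $i$ and $j$ are indifferent and therefore weakly prefer $LR(x)$ to $RL(x)$ as well as $RL(x)$ to $LR(x)$. This already exhibits two agents on each side, so both $LR(x) \msucceq RL(x)$ and $RL(x) \msucceq LR(x)$ hold, and $x$ is a majority switching point.

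The converse is the step that carries the content, and I would prove it by contradiction. Suppose $x$ is a majority switching point but $V(LR(x)) \neq V(RL(x))$; without loss of generality assume $V(LR(x)) > V(RL(x))$. Because $i$ and $j$ share the valuation $V$, both strictly prefer $LR(x)$ to $RL(x)$, so neither of them weakly prefers $RL(x)$ to $LR(x)$. The only agent who could then possibly weakly prefer $RL(x)$ to $LR(x)$ is $k$, giving at most one such agent, which falls short of the required two. This contradicts $RL(x) \msucceq LR(x)$, so in fact $V(LR(x)) = V(RL(x))$ and $x$ is a switching point for $i$.

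The only obstacle worth flagging is the counting argument in the converse: the essential point is that, with two of the three agents sharing preferences, any majority of size two must contain at least one of the twins, and since the twins move together, a strict preference by one twin forces the other to the same side and thereby denies the opposing side a majority. Everything else is a direct translation of definitions, and the argument relies only on $n=3$ with $\ceil{\frac{n}{2}}=2$ together with the shared valuation $V$.
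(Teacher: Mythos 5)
Your proposal is correct and follows essentially the same route as the paper: both directions rest on the observation that any majority of size $\ceil{3/2}=2$ among three agents must contain at least one of the two twins, who necessarily agree, forcing $V(LR(x))=V(RL(x))$. The paper argues the nontrivial direction directly (a majority preferring $LR(x)$ must include a twin, hence both twins prefer it, and symmetrically for $RL(x)$) where you argue by contradiction, but this is only a cosmetic difference.
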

\begin{proof}
Suppose that agents $i,j \in N$ have the same valuation $V$. Suppose that $x$ is a majority switching point over $\calC$. Then, at least two agents weakly prefer $LR(x)$ to $RL(x)$, meaning that at least one of the two agents $i$ and $j$ weakly prefers $LR(x)$ to $RL(x)$, which means that both agents weakly prefers $LR(x)$ to $RL(x)$ since $i$ and $j$'s valuations are identical. Similarly, both $i$ and $j$ weakly prefer $RL(x)$ to $LR(x)$. Thus, $x$ is a switching point for $i$. 
The converse direction is immediate. 
\end{proof}

An implication of the above lemma is that when performing Stromquist's protocol, one can point out to a switching point of an individual, instead of a majority one. This allows the value of each piece to change continuously. For a given two-layered cake $\calC$, we write $\calC^{-y}=(C^{-y}_1,C_2)$ as a two-layered cake obtained from $\calC$ where the first segment $[0,y]$ of the top layer is removed, i.e., $C^{-y}_1 = C_1 \setminus [0,y]$. For each majority switching point $x$ over $\calC^{-y}$, we select three different agents $\ell(x)$, $m(x)$, and $r(x)$ as follows: 
\begin{itemize}
\item $\ell(x)$ is an agent who weakly prefers $LR(x,\calC^{-y})$ to $RL(x,\calC^{-y})$;
\item $m(x)$ is an agent who is indifferent between $LR(x,\calC^{-y})$ and $RL(x,\calC^{-y})$; and 
\item $r(x)$ and agent who weakly prefers $RL(x,\calC^{-y})$ to $LR(x,\calC^{-y})$. 
\end{itemize}

\begin{theorem}\label{thm:twolayers3agents}
Suppose that $m=2$ and $n=3$. If there are two different agents with the same valuation, an envy-free complete multi-allocation that is feasible and contiguous exists. 
\end{theorem}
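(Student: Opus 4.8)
The plan is to formalize the two-knife (short-and-long) moving-knife procedure sketched before the statement and to show it terminates at a configuration yielding an envy-free, feasible, contiguous, complete multi-allocation. Throughout, let $i,j$ be the two agents sharing the valuation $V$ and let $k$ be the third agent; I assume (swapping the roles of the two layers if necessary) that some agent weakly prefers the top layer $C_1$ to the bottom layer $C_2$. For each position $y$ of the short knife I write $Y=C_1\cap[0,y]$ for the left-most top piece and $\calC^{-y}=(C_1\setminus[0,y],\,C_2)$ for the remaining cake, and I take $x(y)$ to be a switching point of agent $i$ over $\calC^{-y}$. By Lemma~\ref{lem:switching:identical}, $x(y)$ is simultaneously a majority switching point over $\calC^{-y}$; this is the crucial point that lets me track an \emph{individual} switching point (whose associated piece-values move continuously, under the assumed continuity of densities) rather than a majority one.

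First I would set up the shouting condition: agent $t$ shouts at the first $y$ for which $V_t(Y)\ge \max\{\,V_t(LR(x(y),\calC^{-y})),\,V_t(RL(x(y),\calC^{-y}))\,\}$, i.e. when $Y$ becomes weakly preferred to $t$'s favourite diagonal piece of the remainder. At $y=0$ the piece $Y$ is empty and every agent strictly prefers a diagonal piece, so no one shouts; by the assumption that some agent weakly prefers $C_1$ to $C_2$, that agent shouts by the time $Y$ exhausts the top layer. I would then let $y^\ast$ be the infimum of all shout times and $s$ a shouting agent at $y^\ast$, and argue that at $(y^\ast,x^\ast)$ with $x^\ast=x(y^\ast)$ two inequalities hold simultaneously: (i) $V_s(Y^\ast)\ge \max\{V_s(LR(x^\ast)),V_s(RL(x^\ast))\}$ because $s$ shouts, and (ii) for every agent $t$, $V_t(Y^\ast)\le \max\{V_t(LR(x^\ast)),V_t(RL(x^\ast))\}$ because no agent shouted strictly before $y^\ast$ (passing to the limit $y\to y^\ast$).

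Next I would read off the allocation: give $Y^\ast$ to $s$, and distribute the two diagonal pieces $LR(x^\ast)$ and $RL(x^\ast)$ of $\calC^{-y^\ast}$ to the other two agents so that each receives a weakly preferred one of the two. Envy-freeness then follows from (i) and (ii): $s$ envies neither diagonal recipient by (i); each diagonal recipient does not envy $s$ by (ii); and the two diagonal recipients do not envy each other since each holds a weakly preferred piece. The reason such a split is always available is again Lemma~\ref{lem:switching:identical}: since $x^\ast$ is a switching point for $i$ (hence for $j$), both identical agents are indifferent between $LR(x^\ast)$ and $RL(x^\ast)$, so whichever of $\{i,j,k\}$ plays the role of $s$, at least one indifferent agent remains among the other two and can absorb whichever diagonal piece the remaining agent does not prefer (formally, choosing $\ell(x^\ast),m(x^\ast),r(x^\ast)$ appropriately). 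Completeness holds since $Y^\ast$, $LR(x^\ast)$, $RL(x^\ast)$ partition $\calC$; feasibility and contiguity hold because $Y^\ast$ is a single top-layer interval on $[0,y^\ast]$ while the diagonal pieces occupy the top layer only within $[y^\ast,1]$ and are non-overlapping contiguous pieces by construction (the case $x^\ast<y^\ast$, where the top part of one diagonal piece is empty, is handled directly).

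The step I expect to be the main obstacle is making the moving-knife argument rigorous: establishing that $x(y)$ (equivalently, the values $V_t(LR(x(y)))$ and $V_t(RL(x(y)))$) can be chosen to depend continuously on $y$ and that a well-defined first shout exists. Because $V_i(LR(x,\calC^{-y}))$ need not be monotone in $x$, the switching point of $i$ over $\calC^{-y}$ may fail to be unique and a naive selection (e.g. the left-most one) can jump. I would handle this with a compactness/closedness argument in the spirit of the proof of Lemma~\ref{lem:majority:switching}, tracking the set of admissible pairs $(y,x)$ and extracting a limit configuration at which both (i) and (ii) hold. The remaining bookkeeping (boundary behaviour at $y=0$ and at the moment $Y$ fills the top layer, and the $x^\ast<y^\ast$ case) is routine once continuity is secured.
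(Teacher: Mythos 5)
Your proposal follows essentially the same route as the paper's own proof: the same short-knife/long-knife moving procedure, the same use of Lemma~\ref{lem:switching:identical} to replace the majority switching point by an individual one for the two identical agents, and the same case analysis for assigning $Y$, $LR(x)$, and $RL(x)$ to the shouter and the two remaining agents. If anything, you are more careful than the paper, which asserts continuity of the long knife's position without addressing the non-uniqueness of switching points; your proposed compactness argument over admissible pairs $(y,x)$ is a reasonable way to close that gap.
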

\begin{proof}
Assume w.l.o.g. that at least one agent prefers the top layer over the bottom layer. This means that such agent weakly prefers the top layer to any of the pieces $LR(z,\calC^{-y})$ and $RL(z,\calC^{-y})$ when $y=1$. Suppose that $i \in N$ is one of the two different agents with the same valuations. We design the following protocol for three agents over a two-layered cake: 

\vspace{3pt}
\noindent\fbox{%
	\parbox{0.985\linewidth}{%
		\textbf{Moving-knife protocol for $n=3$ agents} over a two-layered cake $\calC$: w.l.o.g. assume that at least one agent weakly prefers the top layer $(j=1)$ over the bottom layer $(j=2)$\\
		\textit{Step 1.} The referee continuously moves a short knife from the left-most point $(y=0)$ to the right-most point $(y=1)$ over the top layer, while continuously moving a long knife pointing to a switching point over $\calC^{-y}$ for $i$. 
        Let $y$ be the position of the short knife and $Y$ be the top layer piece to its left. Let $x$ be the position of the long knife.\\		
		\textit{Step 2.} The referee stops moving the short knife when some agent $s$ {\em shouts}, i.e., $Y$ becomes at least as highly valuable as the preferred piece among $LR(x,\calC^{-y})$ and $RL(x,\calC^{-y})$. \\
		\textit{Step 3.} We allocate the shouter $s$ to the left-most top piece $Y$ and partitions the rest into $LR(x,\calC^{-y})$ and $RL(x,\calC^{-y})$. 
\begin{itemize}
\item If $s=\ell(x)$, then we allocate $LR(x,\calC^{-y})$ to $m(x)$ and $RL(x,\calC^{-y})$ to $r(x)$. 
\item If $s=m(x)$, then we allocate $LR(x,\calC^{-y})$ to $\ell(x)$ and $RL(x,\calC^{-y})$ to $r(x)$. 
\item If $s=r(x)$, then we allocate $LR(x,\calC^{-y})$ to $\ell(x)$ and $RL(x,\calC^{-y})$ to $m(x)$. 
\end{itemize}
	}%
}
\vspace{3pt}

By our assumption, some agent eventually shouts and thus the protocol returns an allocation $\calA$. Clearly, $\calA$ is feasible, contiguous, and complete. Also, it is easy to see that the shouter $s$ who receives a bundle $Y$ does not envy the other two agents. The agents $i \neq s$ do not envy $s$ because the referee continuously moves both a short and a long knife. Finally, the agents $i \neq s$ do not envy each other by the definition of a majority switching point and by Lemma \ref{lem:switching:identical}. 
\end{proof}

\begin{figure}[t]
\centering
\begin{tikzpicture}[scale=0.55, transform shape]
\draw[fill=red!10, thick] (0,0) rectangle (4,1);
\draw[fill=blue!10, thick] (0,-1) rectangle (3,0); 
\draw[fill=blue!10, thick] (4,0) rectangle (10,1); 
\draw[thick] (3,-1) rectangle (10,0);

\node at (2.0,0.5) {$Y$};
\node at (1.5,-0.5) {$RL(x)$};
\node at (6.5,0.5) {$RL(x)$};
\node at (6.0,-0.5) {$LR(x)$};

\node at (3.0,1.3) {$x$};
\node at (4.0,1.3) {$y$};

\begin{scope}[yshift=-3cm]
\draw[fill=red!10, thick] (0,0) rectangle (4,1);
\draw[thick] (4,0) rectangle (10,1);
\draw[fill=blue!10, thick] (6,0) rectangle (10,1); 
\draw[fill=blue!10, thick] (0,-1) rectangle (6,0); 
\draw[thick] (6,-1) rectangle (10,0);

\node at (2.0,0.5) {$Y$};
\node at (6.0,1.3) {$x$};
\node at (4.0,1.3) {$y$};

\node at (3.0,-0.5) {$RL(x)$};
\node at (8.0,0.5) {$RL(x)$};
\node at (5.0,0.5) {$LR(x)$};
\node at (8.0,-0.5) {$LR(x)$};
\end{scope}
 
\end{tikzpicture}
\caption{Moving knife protocol for three agents over a two-layered cake. Note that the position of $x$ may appear before $y$. 
}
\label{fig:EF:three}
\end{figure}
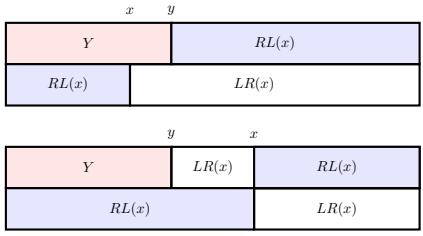

In the general case, the existence question of contiguous and feasible envy-free multi-allocations deems to be challenging due to the non-monotonicity of valuations over diagonal pieces.\footnote{See Section \ref{sec:discussion} for an extensive discussion.} 
In the next subsection, we thus turn our attention to the case when the contiguity requirement is relaxed.

\subsection{Non-connected pieces}
Having seen that an envy-free multi-allocation that is both feasible and contiguous exists for a special case, we will consider the case when the contiguity requirement is dropped, namely, agents may receive a collection of sub-intervals of each layer. 
We will show the existence of an envy-free multi-allocation that is feasible and uses at most poly$(n)$ number of cuts within each layer, assuming that each density function $v_{ij}$ is continuous. In what follows, we will reduce the problem to finding a `perfect' allocation of a one-layered cake. An allocation of a single-layered cake is called {\em perfect} if each agent values every allocated piece exactly at his proportional fair share $\frac{1}{n}$. It is known that such allocation consisting of at most poly$(n)$ number of contiguous pieces exists whenever agents' value density functions are continuous \citep{Alon1987}. It is not surprising that the existence of a perfect allocation implies the existence of an envy-free allocation over a single-layered cake. We show that this result also implies the existence of envy-free allocations over a multi-layered cake. 

\begin{theorem}\label{thm:EF:any}
Suppose that $m \leq n$ and each $v_{i,j}$ for $i \in N$ and $j \in L$ is continuous. Then, an envy-free complete feasible multi-allocation $\calA=(\calA_i)_{i \in N}$ where each piece $A_{ij}$ for agent $i \in N$ and layer $j \in L$ contains at most poly$(n)$ number of contiguous pieces exists. 
\end{theorem}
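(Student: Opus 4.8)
The plan is to construct a \emph{perfect} feasible multi-allocation, i.e.\ one in which every agent values every bundle at exactly $\frac{1}{n}$; such an allocation is automatically envy-free (and proportional). The central idea is to cut all layers along a single, common partition of the time axis $[0,1]$, so that the feasibility constraint collapses into a purely combinatorial assignment problem.

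First, I would treat the $nm$ density functions $\{v_{ij}\}_{i \in N, j \in L}$ as $nm$ measures on $[0,1]$ and apply the consensus-splitting (perfect-partition) machinery behind the cited result of \citep{Alon1987}. Since each $v_{ij}$ is continuous, there is a partition of $[0,1]$ into $n$ measurable sets $S_1,\dots,S_n$, each a union of at most $\mathrm{poly}(n)$ intervals (the number of cuts is $nm(n-1)$, which is $\mathrm{poly}(n)$ because $m\le n$), such that $V_{ij}(S_k \cap C_j)=\frac{1}{n}V_{ij}(C_j)$ for all $i \in N$, $j \in L$, and $k \in [n]$. Applying the same cut points to every layer produces an $n\times m$ grid of \emph{cells}, where cell $(k,j)$ is the piece $S_k\cap C_j$ of layer $j$; by construction every agent values cell $(k,j)$ at exactly $\frac{1}{n}V_{ij}(C_j)$, independently of $k$. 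Crucially, because $S_1,\dots,S_n$ are pairwise time-disjoint, cells in different columns ($k\neq k'$) never overlap, so the entire feasibility question reduces to how the $m$ cells within each column are distributed among the agents.

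Next I would assign cells by the cyclic rule $a(k,j) = (k+j \bmod n)$ (suitably $1$-indexed): give cell $(k,j)$ to agent $a(k,j)$, and set $A_{ij}=S_{k}\cap C_j$ for the unique slot $k$ with $a(k,j)=i$. For each fixed layer $j$ the map $k\mapsto a(k,j)$ is a bijection of $[n]$, so the $n$ cells of layer $j$ partition $C_j$ among the agents; hence $\calA$ is \emph{complete} and each agent receives exactly one cell per layer. Summing, $V_i(\calA_i)=\sum_{j\in L}\frac{1}{n}V_{ij}(C_j)=\frac{1}{n}$, and the identical computation for any other bundle gives $V_i(\calA_{i'})=\frac{1}{n}$, so $\calA$ is perfect and therefore envy-free. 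Agent $i$ receives its layer-$j$ cell from slot $k_j \equiv i-j \pmod n$; for distinct $j,j'\in\{1,\dots,m\}$ with $m\le n$ these slots are distinct, so the pieces of agent $i$ lie in pairwise time-disjoint sets $S_{k_j}$ and the bundle is non-overlapping, i.e.\ \emph{feasible}. Finally, each $A_{ij}=S_{k_j}\cap C_j$ inherits at most $\mathrm{poly}(n)$ contiguous pieces from $S_{k_j}$.

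I expect the main obstacle to be precisely the feasibility requirement: independently perfect partitions of the individual layers (as in the stated single-cake result) share no common cut points, making overlaps across layers uncontrollable. The key move that sidesteps this is to demand one common perfect partition of the time axis for all $nm$ measures at once, which converts feasibility into the clean condition ``each agent's chosen slots are distinct across layers,'' satisfied by the cyclic assignment exactly because $m\le n$. The remaining work is bookkeeping: confirming that the consensus-splitting result applies to $nm$ continuous measures with a $\mathrm{poly}(n)$ cut bound, and checking that empty cells where $C_j\subsetneq[0,1]$ introduce no overlaps.
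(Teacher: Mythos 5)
Your proposal is correct and follows essentially the same route as the paper: both reduce to a single consensus/perfect partition of the time axis that is simultaneously $\frac{1}{n}$-exact for all $nm$ layer-measures (the paper obtains it via a perfect allocation among $mn$ dummy agents and then groups consecutive pieces, which yields exactly your $S_1,\dots,S_n$), and both then use a cyclic assignment of slots to agents, with $m\le n$ guaranteeing distinct, hence time-disjoint, slots per agent. Your write-up of the feasibility step is in fact slightly more explicit than the paper's, but the argument is the same.
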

\begin{proof}
We assume without loss of generality that each layer $C_j$ is the whole interval $[0,1]$ by just putting zero valuations on the part outside $C_j$. 
Now, we construct an instance of a single-layered cake $I=[0,1]$ as follows. First, create a dummy agent $i_j$ for each agent $i \in N$ and each layer $j \in L$. Each dummy agent $i_j$ has a valuation $v'_{i_j}$ determined by agent $i$'s valuation for the $j$-th layered cake, i.e., for each sub-interval $X \subseteq I$, $v'_{i_j}(X)=v_{i,j}(X)$. 
We will show that a perfect allocation of the artificial cake among the $mn$ dummy agents induces an envy-free multi-allocation of the original layered cake. 

Specifically, take a perfect allocation $(X_1,X_2,\ldots,X_{mn})$ of this instance where each $X_t$ for $t \in [mn]$ contains at most poly$(n)$ number of contiguous pieces, which is guaranteed to exist \citep{Alon1987}. We then group each consecutive $m$ sequence of pieces together: namely, let $Y_{h}=\bigcup^{im}_{t=(i-1)m+1}X_t$ for each $h \in [n]$. By the definition of a perfect allocation, we have
\begin{align}\label{eq}
&v_{ij}(Y_h)= \frac{v_{ij}(C_j)}{n},
\end{align}
for any $h \in [n]$. Now, we partition each layer into $n$ pieces using the partition $(Y_1,Y_2,\ldots,Y_n)$ of the artificial cake and allocate to the agents so that each agent  receives exactly one piece $Y_h$ for each layer. Formally, consider a permutation $\sigma_j:[n] \rightarrow [n]$ where
\[
\sigma_j(i)
=i+j -1 \pmod{m}. 
\]
Construct an multi-allocation $\calA=(\calA_i)_{i \in N}$ where each agent $i \in N$ is assigned to $A_{ij}=Y_{\sigma_j(i)}$ for each layer $j \in L$. By our construction, each $A_{ij}$ contains at most poly$(n)$ number of contiguous pieces. Also, each layered piece $\calA_i$ is non-overlapping as $(Y_1,Y_2,\ldots,Y_n)$ is a partition of the interval $[0,1]$. By \eqref{eq}, it is immediate to see that $\calA$ is envy-free. 
\end{proof}

\section{Proportional multi-layered cake cutting}
Focusing on a less demanding fairness notion, it turns out that a complete proportional multi-allocation that is both feasible and contiguous exists, for a wider class of instances, i.e., when the number $m$ of layers is some power of two, and the number $n$ of agents is at least $m$. Notably, we show that the problem can be decomposed into smaller instances when the number of agents is at least the number of layers and the number of layers is a power of two. Building up on the {\em base case} of two layers, our algorithm recursively calls the same algorithm to decide on how to allocate the cake of the sub-problems. We further show that if we relax the contiguity requirement, a proportional feasible multi-allocation can be computed efficiently whenever $m \leq n$. 

\subsection{Connected pieces}
In this subsection, we will show that a proportional complete multi-allocation exists for any $n \geq m$ when $m$ is some power of $2$. We start by presenting two auxiliary lemmata. We define a {\em merge} of two disjoint contiguous pieces $I_j$ and $I_{j'}$ of layers $j$ and $j'$ as replacing the $j$-th layered cake with the union $I_j \cup I_{j'}$ and removing $j'$-th layered cake. The {\em merge} of a finite sequence of mutually disjoint contiguous pieces $(I_1,\ldots,I_k)$ can be defined inductively: merge $(I_1,\ldots,I_{k-1})$ and then apply the merge operation to the resulting outcome and $I_k$. 
Now we observe that if there are two disjoint layers, one can safely merge these layers and reduce the problem size. 

\begin{lemma}\label{lem:merge}
Suppose that $C_j$ and $C_{j'}$ are two disjoint layers of a layered cake $\calC$, and $\calC'$ is obtained from $\calC$ by merging $C_j$ and $C_{j'}$. Then, each non-overlapping contiguous layered piece of $\calC'$ is a non-overlapping contiguous layered piece of the original cake $\calC$. 
\end{lemma}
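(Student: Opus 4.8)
The plan is to make precise the natural correspondence that ``un-merges'' a layered piece of $\calC'$ back into a layered piece of $\calC$, and then to verify that this correspondence preserves both contiguity and the non-overlapping property. Recall that $\calC'$ has the same layers as $\calC$ except that layer $j'$ has been deleted and layer $j$ now has ground set $C_j \cup C_{j'}$. Given a non-overlapping contiguous layered piece $\calX' = (X'_k)_{k \in L \setminus \{j'\}}$ of $\calC'$, I would define the corresponding layered piece $\calX = (X_k)_{k \in L}$ of $\calC$ by splitting the merged layer along the two original layers: set $X_j := X'_j \cap C_j$, $X_{j'} := X'_j \cap C_{j'}$, and $X_k := X'_k$ for every $k \neq j, j'$. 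This is the only reasonable identification, since the content $X'_j$ allocated on the merged layer physically sits inside $C_j \cup C_{j'}$.

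First I would check contiguity. Since $\calX'$ is contiguous, $X'_j$ is a single subinterval of $[0,1]$, and $C_j, C_{j'}$ are subintervals as well. As the intersection of two intervals is again an interval (possibly empty), both $X_j = X'_j \cap C_j$ and $X_{j'} = X'_j \cap C_{j'}$ are contiguous pieces; the remaining $X_k = X'_k$ are contiguous by hypothesis. Hence $\calX$ is contiguous.

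Next I would verify non-overlapping by checking every pair of distinct layers of $\calX$. The only genuinely new pairs are those involving $X_j$ or $X_{j'}$. Since $X_j \subseteq C_j$, $X_{j'} \subseteq C_{j'}$, and $C_j \cap C_{j'} = \emptyset$ by the disjointness hypothesis, we obtain $X_j \cap X_{j'} = \emptyset$. For any other layer $k$, both $X_j$ and $X_{j'}$ are subsets of $X'_j$ while $X_k = X'_k$; because $\calX'$ is non-overlapping we have $X'_j \cap X'_k = \emptyset$, so $X_j \cap X_k = X_{j'} \cap X_k = \emptyset$. Pairs among the untouched layers are inherited directly from $\calX'$. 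Thus no two pieces from different layers of $\calX$ overlap, i.e., $\calX$ is non-overlapping.

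I do not expect a substantive obstacle here: the statement is essentially a bookkeeping argument, and the crux is simply to fix the un-merging identification and then observe that (i) restricting a single interval $X'_j$ to the disjoint intervals $C_j$ and $C_{j'}$ yields two contiguous pieces, and (ii) every potential overlap created by the split is already forbidden in $\calX'$. The one point worth stating carefully is the boundary case in which $C_j$ and $C_{j'}$ are merely adjacent (touching at an endpoint): then $X'_j$ may straddle the junction and split into two nonempty adjacent intervals, but each remains a contiguous piece and their intersection is at most a single (measure-zero) point, so both conclusions still hold.
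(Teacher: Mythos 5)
Your proof is correct and takes essentially the same approach as the paper's: both define the un-merged layered piece by intersecting the piece on the merged layer with the original layers $C_j$ and $C_{j'}$ while leaving the other layers unchanged, and then observe that disjointness of $C_j$ and $C_{j'}$ gives both contiguity and the non-overlapping property. The paper declares this verification ``immediate,'' whereas you spell out the case analysis (including the adjacent-intervals boundary case), but the underlying argument is identical.
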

\begin{proof}
Suppose that $C_j$ and $C_{j'}$ are two disjoint layers of a layered cake $\calC=(C_{t})_{t \in L}$, and the layered cake $\calC'=(C'_{t})_{t \in L \setminus \{j'\}}$ is obtained from $\calC$ by merging $C_j$ and $C_{j'}$. Let $\calX'=(X'_{t})_{t \in L \setminus \{j'\}}$ be a non-overlapping contiguous piece of $\calC'$. Consider the corresponding layered piece $\calX=(X_{t})_{t \in L}$ of the original cake $\calC$ where $X_t=X'_{t}$ for $t \in L \setminus \{j,j'\}$ and $X_t = X'_t \cap C_{t}$ for $t \in \{j,j'\}$. It is immediate to see that $\calX'$ is non-overlapping and contiguous, since $C_j$ and $C_{j'}$ are disjoint. 
\end{proof}

The above lemma can be generalized further: Let $\calC$ be a $2m$-layered cake and $x \in [0,1]$. We define a {\em merge} of $LR(x)=(S_j)_{j \in L}$ by merging the pair $(S_j,S_{j+m})$ for each $j \in [m]$. A {\em merge} of $RL(x)$ can be defined analogously. Such operation still preserves both feasibility and contiguity. 

\begin{corollary}\label{cor:merge}
Let $\calC$ be a $2m$-layered cake and $x \in [0,1]$. Suppose that $\calC'$ is a $m$-layered cake obtained by merging $LR(x,\calC)$ or $RL(x,\calC)$. Then, each non-overlapping contiguous layered piece of $\calC'$ is a non-overlapping contiguous layered piece of the original cake $\calC$. 
\end{corollary}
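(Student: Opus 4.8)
The plan is to reduce the statement to an iterated application of Lemma \ref{lem:merge}, together with the elementary observation that passing to sub-layers of a cake preserves both the non-overlapping property and contiguity. First I would stop regarding $LR(x,\calC)$ as a subset of $\calC$ and instead view it as a $2m$-layered cake in its own right. Writing $LR(x,\calC)=(S_j)_{j\in[2m]}$ with $S_j=C_j\cap[0,x]$ for $j\le m$ and $S_j=C_j\cap[x,1]$ for $m<j\le 2m$, each $S_j$ is the intersection of the contiguous piece $C_j$ with a half-line, hence is itself a contiguous piece, and it satisfies $S_j\subseteq C_j$. Call this $2m$-layered cake $\calD$.

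Second, I would realize the merge of $LR(x,\calC)$ as the composition of $m$ single-pair merges, one for each $j\in[m]$, applied to the pair $(S_j,S_{j+m})$. The point to verify is that every such pair meets the hypothesis of Lemma \ref{lem:merge}: the piece $S_j$ lies in $[0,x]$ while $S_{j+m}$ lies in $[x,1]$, so the two are disjoint (the single shared endpoint $x$ carries no value and can be assigned to either side). Moreover the $m$ pairs involve pairwise distinct layers, so the merges do not interfere; after processing the pair $(S_j,S_{j+m})$ the remaining pairs are still disjoint, and the hypothesis of Lemma \ref{lem:merge} continues to hold at every step. Thus the simultaneous merge producing $\calC'$ is a legitimate sequence of $m$ single merges.

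Third, I would invoke Lemma \ref{lem:merge} once per merge: each merge guarantees that a non-overlapping contiguous layered piece of the post-merge cake is a non-overlapping contiguous layered piece of the pre-merge cake. Composing these $m$ implications shows that every non-overlapping contiguous layered piece of $\calC'$ is a non-overlapping contiguous layered piece of $\calD$. Finally, since $S_j\subseteq C_j$ for every $j$, any layered piece of $\calD$ is literally a layered piece of $\calC$ on the same layers, with contiguity and the non-overlapping property inherited verbatim; this yields the conclusion. The argument for $RL(x,\calC)$ is identical after swapping the roles of $[0,x]$ and $[x,1]$.

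I expect the only genuinely delicate step to be the second one: making precise that the simultaneous merge of $m$ pairs decomposes into a valid sequence of single merges, each satisfying the disjointness hypothesis of Lemma \ref{lem:merge}, and checking that the harmless overlap at the cut point $x$ does not cause a feasibility violation. Everything else is routine bookkeeping.
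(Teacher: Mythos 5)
Your proposal is correct and follows essentially the same route as the paper: the paper's proof also reduces the statement to applying Lemma \ref{lem:merge} to each pair $(S_j,S_{j+m})$ for $j\in[m]$, though it states this in one sentence without spelling out the decomposition into a sequence of single merges, the disjointness of each pair (up to the measure-zero endpoint $x$), or the final passage from the sub-cake back to $\calC$. Your write-up is a more careful version of the same argument, not a different one.
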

\begin{proof}
Suppose that $\calC'$ is a $m$-layered cake obtained by merging $LR(x,\calC)=(S_j)_{j \in L}$ of a $2m$-layered cake $\calC$. By Lemma \ref{lem:merge}, a non-overlapping contiguous layered piece of the cake obtained from each merge of the pair $(S_j,S_{j+m})$ for $j \in [m]$ still corresponds to a non-overlapping and contiguous piece of the original cake. Thus, the claim holds. An analogous argument applies to the case when we merge $RL(x,\calC)$. 
\end{proof}

We are now ready to prove that a proportional complete multi-allocation exists for any $n=m$ when $m$ is some power of $2$. In essence, the existence of a majority switching point, as proved in Lemma \ref{lem:majority:switching}, allows us to divide the problem into two instances. We will repeat this procedure until the number of layers of the subproblem becomes $2$, for which we know the existence of a proportional, feasible, contiguous multi-allocation by Theorem \ref{thm:EF:two}.  

\begin{theorem}\label{thm:prop:base}
A proportional complete multi-allocation that is feasible and contiguous exists, for any number $m$ of layers and any number $n= m$ of agents where $m=2^a$ for some $a \in \bbZ_{+}$. 
\end{theorem}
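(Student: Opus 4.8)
The plan is to induct on $a$, using the majority switching point of Lemma~\ref{lem:majority:switching} to split a $2^a$-layered instance into two $2^{a-1}$-layered instances, and the merge operation of Corollary~\ref{cor:merge} to guarantee that solving the two halves independently reassembles into a feasible, contiguous, complete allocation of the original cake. The base case $a=1$ is exactly Theorem~\ref{thm:EF:two}, which yields an envy-free---hence, by Lemma~\ref{lem:propEF}, proportional---feasible and contiguous complete allocation for $n=m=2$. For the inductive step I would assume the statement for $2^{a-1}$ layers and $2^{a-1}$ agents, phrased in the scale-free form that each agent can be guaranteed at least a $\frac{1}{2^{a-1}}$ fraction of its \emph{own} total value of the subcake; this is equivalent to proportionality after normalization and is what the recursion actually consumes.

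First I would invoke Lemma~\ref{lem:majority:switching} to obtain a majority switching point $x$ for the $m=2^a$ layers and $n=m$ agents. Let $S_L$ (respectively $S_R$) denote the set of agents weakly preferring $LR(x)$ to $RL(x)$ (respectively $RL(x)$ to $LR(x)$); by definition $|S_L|,|S_R|\ge \ceil{\frac{n}{2}}=\frac{n}{2}$, and every agent lies in $S_L\cup S_R$. The key combinatorial step is to partition the agents into two blocks $P_L\subseteq S_L$ and $P_R\subseteq S_R$ of size exactly $n/2$ each: agents who strictly prefer $LR(x)$ must go to $P_L$ and those who strictly prefer $RL(x)$ to $P_R$, while indifferent agents are free to be placed on either side. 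A short counting argument shows this is always possible: writing $a_0,b_0,c_0$ for the numbers of strict-$LR$, strict-$RL$, and indifferent agents, the bounds $a_0+c_0\ge n/2$ and $b_0+c_0\ge n/2$ force $a_0,b_0\le n/2$, so there remain exactly $c_0=(n/2-a_0)+(n/2-b_0)$ indifferent agents---precisely enough to pad each block up to size $n/2$.

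Next I would apply Corollary~\ref{cor:merge}: merging $LR(x)$ produces a $2^{a-1}$-layered cake $\calC_L$ and merging $RL(x)$ produces a $2^{a-1}$-layered cake $\calC_R$, with the property that any feasible contiguous piece of $\calC_L$ or $\calC_R$ lifts to a feasible contiguous piece of the original $\calC$. Because $LR(x)$ and $RL(x)$ partition $\calC$ and merging preserves values additively, every $i\in P_L$ satisfies $V_i(\calC_L)=V_i(LR(x))\ge\frac12$ (using $V_i(LR(x))+V_i(RL(x))=1$ together with $i\in S_L$), and symmetrically for $P_R$. Solving $\calC_L$ among $P_L$ and $\calC_R$ among $P_R$ by the induction hypothesis, each $i\in P_L$ obtains a bundle worth at least $\frac{1}{2^{a-1}}V_i(LR(x))=\frac{2}{n}V_i(LR(x))\ge\frac{2}{n}\cdot\frac12=\frac1n$, which is exactly proportionality, and the same bound holds on the $P_R$ side. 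Finally I would lift both sub-allocations through Corollary~\ref{cor:merge} and take their union; since each agent's bundle lies entirely within $LR(x)$ or entirely within $RL(x)$, the combined allocation is feasible, and it is contiguous and complete because the two sub-allocations are and because $LR(x)$ and $RL(x)$ tile $\calC$.

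I expect the main obstacle to be the bookkeeping that makes the value guarantee come out to exactly $\frac1n$: choosing the agent partition so that each block is \emph{both} of the correct size $n/2$ \emph{and} contained in the matching majority-preference set, and then verifying that the ``at least $\frac12$ on one's own side'' property survives the merge and feeds the scale-free induction hypothesis correctly. By contrast, the geometric claims---feasibility, contiguity, and completeness---are essentially handed to us by Corollary~\ref{cor:merge} and by the fact that $LR(x)$ and $RL(x)$ partition the cake, so the substantive content lies in the counting argument together with this value chain.
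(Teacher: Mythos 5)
Your proposal is correct and follows essentially the same route as the paper's proof: a majority switching point (Lemma~\ref{lem:majority:switching}) splits the instance in two, Corollary~\ref{cor:merge} reassembles the recursive solutions, and the scale-free value chain $V_i(\calA_i)\ge\frac{1}{2^{a-1}}V_i(LR(x))\ge\frac1n$ closes the induction with base case Theorem~\ref{thm:EF:two}. Your explicit counting argument for distributing the indifferent agents into two blocks of size exactly $n/2$ is a detail the paper states without proof, so it is a welcome addition rather than a deviation.
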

\begin{proof}
We design the following recursive algorithm $\calD$ that takes a subset $N'$ of agents with $|N'| \geq 2$, a $|L'|$-layered cake $\calC'$, and a valuation profile $(V_{i})_{i \in N'}$, and returns a proportional complete multi-allocation of the cake to the agents which is feasible. Suppose that $m=n$. If $m=n=1$, then we allocate the entire cake to the single agent. If $m=n=2$, we run the cut-and-choose algorithm as described in the proof of Theorem \ref{thm:EF:two}. 
Now consider the case when $m=n=2^a$ for some integers $a \geq 1$. Then the algorithm finds a majority switching point $x$ over $\calC'$. We let $\calI_1=LR(x)$ and $\calI_2=RL(x)$. By definition of a majority switching point and the fact that $n$ is even, we can partition the set of agents $N'$ into $N_1$ and $N_2$ where $N_1$ is the set of agents who weakly prefer $\calI_1$ to $\calI_2$, $N_2$ be the set of agents who weakly prefer $\calI_2$ to $\calI_1$, and $|N_k| = \frac{|N'|}{2}$ for each $k=1,2$. 
We apply $\calD$ to the merge of $\calI_k$ with the agent set $N_k$ for each $k=1,2$, respectively. 

We will show that by induction on the exponential $a$, that the complete multi-allocation $\calA$ returned by $\calD$ satisfies proportionality as well as feasibility and contiguity. This is clearly true when $m=n=2$ due to Lemma \ref{lem:propEF} and Theorem \ref{thm:EF:two}. Suppose that the claim holds for $m=n=2^a$ with $1 \leq a \leq k-1$; we will prove it for $a=k$. Suppose that the algorithm divides the input cake $\calC'$ via a majority switching point $x$ into $\calI_1=LR(x)$ and $\calI_2=RL(x)$. Suppose that $(N_1,N_2)$ is a partition of the agents where $N_1$ is the set of agents who weakly prefer $\calI_1$ to $\calI_2$, $N_2$ is the set of agents who weakly prefer $\calI_2$ to $\calI_1$, and $|N_k| = \frac{|N'|}{2}$ for each $k=1,2$. Observe that each agent $i \in N_1$ weakly prefers $\calI_1$ to $\calI_2$ and thus $V_i(\calI_1) \geq \frac{1}{2}V_i(\calC')$. Similarly, $V_i(\calI_2) \geq \frac{1}{2}V_i(\calC')$ for each $i \in N_2$. Thus, by the induction hypothesis, each agent $i$ has value at least $\frac{1}{|N'|}V_i(\calC')$ for its allocated piece $\calA_{i}$. Further, by Corollary \ref{cor:merge}, each non-overlapping contiguous layered piece of the merge of $\calI_1$ (respectively, $\calI_2$) is a contiguous non-overlapping layered piece of the original cake $\calC$. By the induction hypothesis, the algorithm outputs a multi-allocation of each merge that is contiguous. Thus, the algorithm returns a proportional complete multi-allocation that is feasible and contiguous.
\end{proof}

We will generalize the above theorems to the case when the number of agents is strictly greater than the number of layers. Intuitively, when $n>m$, then there is at least one layer whose sub-piece can be `safely' allocated to some agent without violating the non-overlapping constraint. 

\begin{theorem}\label{thm:exponential}
A proportional complete multi-allocation that is feasible and contiguous exists, for any number $m$ of layers and any number $n \geq m$ of agents where $m=2^a$ for some $a \in \bbZ_{+}$. 
\end{theorem}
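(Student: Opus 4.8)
The plan is to induct on the number $n$ of agents while keeping the number of layers $m = 2^a$ fixed, using Theorem \ref{thm:prop:base} as the base case $n = m$. For the inductive step $n > m$, I would peel off a proportional share for a single agent by a single-layer Dubins--Spanier sweep, thereby reducing to an instance with $n-1$ agents on a cake that still has $m$ layers; since $m$ is unchanged it remains a power of two, and $n - 1 \geq m$ still holds because $n > m$.

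The first step is to locate a layer from which a $\frac1n$-slice can be safely cut. Because valuations are normalized, $\sum_{j \in L} V_{ij}(C_j) = 1$ for each agent $i$, so every agent has some layer worth at least $\frac1m \geq \frac1n$ to it; fix any such agent $i_0$ and layer $j_0$, so that $V_{i_0 j_0}(C_{j_0}) \geq \frac1n$. Now sweep a short knife across $C_{j_0}$ from left to right and track $g_i(t) = V_{i j_0}(C_{j_0} \cap [0,t])$ for each agent $i$. Each $g_i$ is continuous in $t$ (it is an integral in its upper limit) with $g_i(0) = 0$, and $g_{i_0}(1) \geq \frac1n$, so by the intermediate value theorem at least one agent attains value $\frac1n$. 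Let $i^*$ be the first agent to reach $\frac1n$, at the point $t^*$. Then $i^*$ values the contiguous single-layer piece $C_{j_0} \cap [0,t^*]$ at exactly $\frac1n$, while every other agent values it at most $\frac1n$, since none of them has yet reached its own threshold.

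Next, I would allocate $C_{j_0} \cap [0,t^*]$ to $i^*$ and recurse on the remaining cake, obtained by replacing $C_{j_0}$ with the contiguous piece $C_{j_0} \cap [t^*,1]$ and leaving the other layers untouched. This remaining cake is a legitimate $m$-layered cake with $m = 2^a$, to be divided among the $n-1$ remaining agents with $n - 1 \geq m$. Because each remaining agent values the removed slice at most $\frac1n$, it values the remaining cake at least $\frac{n-1}{n}$, so by the induction hypothesis each such agent receives a feasible contiguous bundle worth at least $\frac1{n-1} \cdot \frac{n-1}{n} = \frac1n$ of its total value; the departing agent $i^*$ receives exactly $\frac1n$. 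Hence proportionality holds for all $n$ agents. Feasibility and contiguity are immediate: the bundle of $i^*$ is a single interval confined to one layer, occupying a time-range of $C_{j_0}$ disjoint from everything handed out by the recursion, so it introduces no overlap and every bundle stays contiguous.

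The main obstacle to guard against is the possibility that no single layer is valuable enough to yield a $\frac1n$ slice to anyone; this is exactly what the normalization-plus-pigeonhole argument of the first step rules out, and it is the one place where the hypothesis $n \geq m$ is genuinely used, through $\frac1m \geq \frac1n$. The rest is careful bookkeeping: verifying that deleting a left-hand slice of one layer keeps the instance inside the class handled by the induction (an $m$-layered cake with contiguous layers and $m$ a power of two), and observing that confining the peeled piece to a single layer is precisely what makes it trivially non-overlapping, so that the feasibility and contiguity guarantees of the recursive sub-allocation survive when the peeled piece is reattached.
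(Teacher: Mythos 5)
Your proposal is correct and follows essentially the same route as the paper's proof: reduce $n > m$ to $n-1$ agents by a Dubins--Spanier sweep on a single layer that some agent values at least $\frac{1}{n}$, give the left piece to the first shouter, and recurse down to the base case $n=m$ handled by Theorem~\ref{thm:prop:base}. Your pigeonhole justification for the existence of such a layer is a detail the paper leaves implicit, but otherwise the decomposition, the induction, and the feasibility/contiguity bookkeeping match the paper's argument.
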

\begin{proof}
We design the following recursive algorithm $\calD$ that takes a subset $N'$ of agents with $|N'| \geq 2$, a $|L'|$-layered cake $\calC'$, and a valuation profile $(V_{i})_{i \in N'}$, and returns a proportional complete multi-allocation of the layered cake to the agents  which is feasible. For $n=m$, we apply the algorithm described in the proof of Theorem \ref{thm:prop:base}. Suppose that $n>m$. The algorithm first identifies a layer $C_j$ whose entire valuation is at least $\frac{1}{n}$ for some agent; assume w.l.o.g. that $j=1$. We move a knife from left to right over the top cake $C_1$ until some agent $i$ {\em shouts}, i.e., agent $i$ finds the left contiguous piece $Y$ at least as highly valued as his proportional fair share $\frac{1}{n}$. The algorithm $\calD$ then gives the piece to the shouter. To decide on the allocation of the remaining items, we apply $\calD$ to the reduced instance $(N' \setminus \{i\},(C'_j)_{j \in L},(V_{i'})_{i' \in N' \setminus \{i\}})$ where $C'_j=C_j \setminus Y$ for $j=1$ and $C'_j=C_j$ for $j \neq 1$. 

We will prove by induction on $|N'|$ that the complete multi-allocation $\calA=(\calA_1,\calA_2,\ldots,\calA_n)$ returned by $\calD$ satisfies proportionality as well as feasibility and contiguity. This is clearly true when $m=|N'|$, due to Theorem  \ref{thm:prop:base}. Suppose that the claim holds for $|N'|$ with $m \leq |N'| \leq k-1$; we will prove it for $|N'|=k$. Suppose agent $i$ is the shouter who gets the left contiguous piece $Y$. Clearly, agent $i$ receives her proportional share under $\calA$. Observe that all remaining agents have the value at least $\frac{|N'|-1}{|N'|}V_i(\calC')$ for the remaining cake. Thus, by the induction hypothesis, each agent $i' \neq i$ has value at least $\frac{1}{|N'|}V_i(\calC')$ for its allocated piece $\calA_{i'}$. The feasibility and contiguity of $\calA$ are immediate by the induction hypothesis. This completes the proof. 
\end{proof}

\subsection{Non-connected pieces}
Since envy-freeness implies proportionality, Theorem \ref{thm:EF:any} in the previous section implies the existence of a proportional feasible multi-allocation when agents' value density functions are continuous. We strengthen this result, by showing that such desirable allocation exists for a more general case and by providing an efficient algorithm for finding one. 

\begin{theorem}\label{thm:prop:feasible}
A proportional complete multi-allocation that is feasible exists when $m =n$ and can be computed using $O(nm^2)$ number of short eval queries and $O(nm)$ number of long eval and cut queries. Further, each bundle of the resulting multi-allocation includes at most two contiguous pieces within each layer. 
\end{theorem}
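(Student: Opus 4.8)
The plan is to exploit the rigidity forced by the equality $m=n$: since no two pieces in an agent's bundle may overlap and the whole cake must be allocated, at \emph{every} time point the $m$ layers must be distributed among the $m$ agents as a bijection. Hence the natural objects to allocate are \emph{diagonal} bundles that occupy exactly one layer at each time. I would first record the structural tool that makes such bundles manageable: if a bundle occupies exactly one layer at each time, its complement occupies exactly $m-1$ layers at each time, and because the relevant time-segments are disjoint, Lemma~\ref{lem:merge} and Corollary~\ref{cor:merge} let me relabel that complement as a genuine $(m-1)$-layered cake without breaking feasibility. This is what would let a diagonal peel reduce an $(n,m)=(k,k)$ instance to $(k-1,k-1)$, keeping the invariant $m'=n'$ alive throughout.

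Concretely, I would aim for a one-cut construction that directly yields the promised piece bound. Fix a single time point $x$, and allocate layers to agents by two (possibly different) permutations $\pi$ and $\rho$: on $[0,x]$ agent $i$ receives layer $\pi^{-1}(i)$, and on $[x,1]$ agent $i$ receives layer $\rho^{-1}(i)$. Each agent then gets at most one interval of each layer on each side, hence at most two contiguous pieces per layer, and the allocation is feasible and complete by construction. The value agent $i$ derives is $V_{i,\pi^{-1}(i)}([0,x])+V_{i,\rho^{-1}(i)}([x,1])$, which is continuous in $x$ because each $V_{ij}([0,\cdot])$ is continuous for integrable densities. The goal is to choose $x$, $\pi$, and $\rho$ so that every agent clears $\frac1n$. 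I would support this with an averaging fact: summing any agent's values over the $n$ cyclic diagonal tilings of the cake equals that agent's total value, so acceptable diagonals are plentiful for each agent individually; the task is to align them into one simultaneous assignment. The algorithmic realization would sweep $x$, and at each candidate solve a bipartite assignment (system of distinct representatives) subroutine over the $m^2$ layer-agent cell values, moving $x$ via an intermediate-value argument, in the spirit of Lemma~\ref{lem:switching}, to repair Hall's condition when it fails. As a fallback I would phrase the same reduction recursively, peeling one agent together with one full layer's worth of time per round and inducting on $n$.

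For correctness, proportionality would follow either from the threshold $\frac1n$ guaranteed at the chosen $(x,\pi,\rho)$ or, in the recursive version, by induction with the Dubins--Spanier invariant that each surviving agent values the remaining (relabeled $(m'-1)$-layered) cake at least $\frac{n'}{n}$; feasibility and completeness are immediate from the diagonal structure and Corollary~\ref{cor:merge}. The query accounting fits the claimed bounds: locating the cut point over the $n$ stages costs $O(nm)$ long eval and cut queries, while evaluating the $O(m^2)$ layer-agent cells and running the assignment subroutine across the stages costs $O(nm^2)$ short eval queries. I expect the \textbf{main obstacle} to be proving that a \emph{single} cut point $x$ admitting a simultaneously proportional pair of permutations always exists. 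The difficulty is precisely the tension I would have to resolve: a diagonal bundle is forced to cover a full layer's worth of time (so that the complement drops to $m-1$ layers), yet every agent must still be kept above $\frac1n$; for an arbitrary $x$ the Hall condition for the assignment can fail, so the crux is showing that sweeping $x$ and simultaneously deforming the two permutations always reaches a configuration where all $n$ agents are satisfied at once.
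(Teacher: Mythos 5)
Your primary construction --- one cut point $x$ together with two permutations $\pi,\rho$ --- cannot work for $m=n\geq 3$, and this is precisely the obstacle you flag but do not resolve. Concretely, take $m=n=3$ with all three agents having the identical valuation concentrated entirely on layer $1$ (uniform density on $C_1$, zero density on $C_2$ and $C_3$). Under any choice of $(x,\pi,\rho)$, only the agent assigned layer $1$ on $[0,x]$ and the agent assigned layer $1$ on $[x,1]$ can receive positive value, so at least one agent ends up with value $0<\frac{1}{3}$. Hence no single cut point admits a simultaneously proportional pair of permutations, and the intermediate-value/Hall-repair sweep you propose has nothing to converge to. Relatedly, the ``at most two contiguous pieces per layer'' bound in the statement does not arise from a one-cut diagonal tiling; the pieces in the paper's construction are staircases produced by up to $m-1$ successive diagonal cuts.

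The paper proves the theorem in two stages that your fallback sketch gestures at but does not supply. First (Lemma \ref{lem:equitable}), a single cutter partitions the cake into $n$ equally valued, non-overlapping, contiguous layered pieces by repeatedly pairing a layer she values at most $\frac{1}{m}$ with one she values at least $\frac{1}{m}$, carving off a diagonal piece of value exactly $\frac{1}{m}$ via Lemma \ref{lem:switching}, and merging the leftovers to reduce to an $(m-1)$-layer instance. Second, it builds the bipartite graph in which agent $i$ is adjacent to piece $\calI_h$ whenever $V_i(\calI_h)\geq\frac{1}{n}$, computes a maximum \emph{envy-free matching} (non-empty by Corollary 1.4(c) of \citet{EladErel}, since the cutter is adjacent to every piece), assigns matched agents their pieces, merges the unmatched pieces into a smaller layered cake, and recurses on the unmatched agents. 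The envy-free matching is exactly the ingredient your ``peel one agent per round'' fallback is missing: it guarantees that every unmatched agent values each matched piece at less than $\frac{1}{n}$, hence values the residual cake at least $1-\frac{\ell}{n}$, which is what allows the induction to close. Without that guarantee, removing an agent together with a piece can leave a surviving agent with too little value in what remains.
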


Below, we show that each agent can divide the entire cake into $n$ equally valued layered pieces. A multi-allocation $\calA$ is {\em equitable} if for each agent $i \in N$, $V_{i}(\calA_i)=\frac{1}{n}$. We design a recursive algorithm that iteratively finds two layers for which one has value at most $\frac{1}{m}$ and at least $\frac{1}{m}$ and removes a pair of diagonal pieces of value exactly $\frac{1}{m}$ from the two layers.

\begin{lemma}\label{lem:equitable}
For any number $m$ of layers and any number $n = m$ of agents with the identical valuations, an equitable complete multi-allocation that is feasible and contiguous exists and can be found using $O(nm^2)$ number of short eval queries and $O(nm)$ number of long cut queries.  
\end{lemma}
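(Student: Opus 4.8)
The plan is to peel off one equitable bundle at a time by means of a single diagonal cut, shrinking the number of layers and the number of remaining agents in lockstep. Since all agents share the common valuation $V$ and the cake is normalized, $\sum_{j \in L} V(C_j) = 1$, so the average value of a layer is exactly $\frac{1}{m}$. Throughout the recursion I would maintain the invariant that the current (possibly merged) layered cake has exactly as many layers as remaining agents, and total value equal to (number of layers)$/m$; consequently the average layer value stays $\frac{1}{m}$ at every stage. The base case is a single layer of value $\frac{1}{m}$, assigned to the last agent.

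For the recursive step, because the average layer value is $\frac{1}{m}$, either every layer has value exactly $\frac{1}{m}$ — in which case I allocate each whole layer to a distinct agent and stop — or there exist a layer $C_j$ with $V(C_j) \geq \frac{1}{m}$ and a layer $C_{j'}$ with $V(C_{j'}) \leq \frac{1}{m}$. Viewing $\{C_j, C_{j'}\}$ as a two-layered cake and applying Lemma~\ref{lem:switching} with $r = \frac{1}{m}$ (the hypotheses hold since one of $LR(0), RL(0)$ equals $C_j$ and the other equals $C_{j'}$, whose values straddle $\frac{1}{m}$), there is a point $x$ with $V(LR(x)) = \frac{1}{m}$. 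The diagonal piece $LR(x)$ is contiguous, non-overlapping, and valued exactly $\frac{1}{m}$, so I assign it as a bundle to one unassigned agent. The complementary piece $RL(x)$ splits into two disjoint contiguous pieces; by Lemma~\ref{lem:merge} I merge them into a single layer, leaving a cake with one fewer layer and one fewer agent and with total value reduced by exactly $\frac{1}{m}$, so the invariant is preserved.

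Correctness then follows by induction on the number of layers. Equitability is immediate because every peeled bundle, and the final whole layer, is valued exactly $\frac{1}{m} = \frac{1}{n}$. Completeness and feasibility follow because the number of layers equals the number of agents at every stage, so the recursion terminates with every agent holding exactly one bundle and the whole cake allocated without overlap, the non-overlapping property of each bundle being guaranteed by Lemma~\ref{lem:merge}. For contiguity I would carry an additional invariant: each logical (merged) layer meets every \emph{original} layer in a single sub-interval. This holds initially, and is preserved because intersecting such a layer with $[0,x]$ or $[x,1]$ and re-merging still yields one interval per original layer; hence each bundle piece $A_{ij}$ is a single contiguous interval of the original layer $j$.

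The main obstacle is not existence — which is essentially Lemma~\ref{lem:switching} plus Lemma~\ref{lem:merge} — but the bookkeeping required to certify contiguity and to bound the queries once merging has turned logical layers into unions of intervals. Two points need attention: the function $x \mapsto V(LR(x))$ on the chosen pair need not be monotone, so the cut point is obtained from the intermediate-value guarantee of Lemma~\ref{lem:switching} rather than by a naive binary search; and evaluating the total value of a merged layer, or locating its diagonal cut, costs a number of short eval queries proportional to its interval count, which grows as layers are merged. Accounting for the at most $n-1$ peeling steps, each requiring the values of all current layers (to locate an over- and an under-valued pair) together with one diagonal cut on a multi-interval pair, yields the stated bounds of $O(nm^2)$ short eval queries and $O(nm)$ long cut queries.
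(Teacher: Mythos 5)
Your proposal is correct and follows essentially the same route as the paper: repeatedly find an under-valued and an over-valued layer, apply Lemma~\ref{lem:switching} with $r=\frac{1}{m}$ to carve off a diagonal piece of value exactly $\frac{1}{m}$, merge the two leftover fragments into one layer, and recurse. The averaging invariant, the per-original-layer contiguity invariant, and the query accounting that you spell out are exactly the details the paper's proof declares ``immediate,'' so your write-up is a more explicit version of the same argument.
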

\begin{proof}
We denote by $V=V_i$ the valuation function for each agent $i \in N$. 
Consider the following recursive algorithm $\calD$ that takes a subset $N'$ of agents with $|N'| \geq 1$, a $|L'|$-layered cake $\calC'$, and a valuation profile $(V_{i})_{i \in N'}$, and returns an equitable complete multi-allocation of the layered cake to the agents. When $|L'|=|N'|=1$, then the algorithm allocates the entire cake to the single agent. 
Suppose that $|L'|=|N'| \geq 2$. 
The algorithm first finds a layer $j$ whose entire value is at most $\frac{1}{m}$ and another layer $j'$ whose entire value is at least $\frac{1}{m}$. The algorithm $\calD$ then finds a point $x \in [0,1]$ where $V(S_{j} \cup S_{j'})=\frac{1}{m}$ for $S_j=C_j \cap [0,x]$ and $S_{j'}=C_j \cap [x,1]$; such point exists due to Lemma \ref{lem:switching}. We allocate $S_{j} \cup S_{j'}$ to one agent and apply $\calD$ to the remaining cake $\calC''$ with $|N'|-1$ agents where $\calC''$ is obtained from merging the remaining $j$-th layered cake $C_j \setminus S_j$ and the $j'$-th layered cake $C_{j'} \setminus S_{j'}$. The correctness of the algorithm as well as the bound on the query complexity are immediate. 
\end{proof}

\begin{figure*}[hbt]
\centering
\begin{tikzpicture}[scale=0.6, transform shape]

\draw[thick] (0,0) rectangle (3,1);
\draw[fill=red!10, thick] (3,0) rectangle (6,1);
\draw[fill=blue!10, thick] (0,-1) rectangle (2,0); 
\draw[fill=red!10, thick] (2,-1) rectangle (3,0);
\draw[thick] (3,-1) rectangle (6,0);
\draw[fill=red!10, thick] (0,-2) rectangle (2,-1);
\draw[fill=blue!10, thick] (2,-2) rectangle (6,-1);

\node at (4.5,0.5) {$I_{11}$};
\node at (2.5,-0.5) {$I_{12}$};
\node at (1,-1.5) {$I_{13}$};

\node at (1.5,0.5) {$I_{21}$};
\node at (4.5,-0.5) {$I_{22}$};

\node at (1,-0.5) {$I_{32}$};
\node at (4,-1.5) {$I_{33}$};

\draw[ultra thick,->] (6.5,-0.5)--(7.5,-0.5);

\begin{scope}[xshift=8cm,yshift=-0.5cm]
\node[fill=red!10,draw, circle](I1) at (1,1) {$\calI_1$};
\node[draw, circle](I2) at (3,1) {$\calI_2$};
\node[fill=blue!10,draw, circle](I3) at (5,1) {$\calI_3$};

\node[fill=gray!10,draw, circle](a1) at (1,-1) {$1$};
\node[fill=gray!10,draw, circle](a2) at (3,-1) {$2$};
\node[fill=gray!10,draw, circle](a3) at (5,-1) {$3$};

\draw[-,red, >=latex,thick] (I1)--(a1);
\draw[-, >=latex,thick] (I2)--(a1);
\draw[-,>=latex,thick] (I3)--(a1);
\draw[-, >=latex,thick] (I2)--(a2);
\draw[-, >=latex,thick] (I2)--(a3);

\draw[ultra thick,->] (6.5,0)--(7.5,0);
\end{scope}

\begin{scope}[xshift=16cm,yshift=0.5cm]
\draw[thick] (0,-1) rectangle (3,0);

\draw[fill=blue!10, thick] (0,-2) rectangle (2,-1); 
\draw[thick] (3,-1) rectangle (6,0);
\draw[fill=blue!10, thick] (2,-2) rectangle (6,-1); 

\node at (1.5,-0.5) {$I_{21}$};
\node at (4.5,-0.5) {$I_{22}$};

\node at (1,-1.5) {$I_{32}$};
\node at (4,-1.5) {$I_{33}$};

\draw[thick,dotted] (4,0.5)--(4,-2.5);
\end{scope}
 
\end{tikzpicture}
\caption{Protocol for proportionality for three agents and three layers. Agent $1$ divides the entire cake into three equally valued layered pieces $\calI_1$, $\calI_2$, and $\calI_3$ (the left-most picture). Here, $\calI_i=(I_{ij})_{j=1,2,3}$ for each $i =1,2,3$ where $I_{23}=I_{31}=\emptyset$. In the middle picture, agent $1$ is adjacent to every piece, meaning that he has value at least proportional fair share for every piece; on the other hand, the other agents are adjacent to the second piece only. The maximum envy-free matching is an edge between $\calI_1$ and agent $1$ (red edge), so the algorithm allocates $\calI_1$ to agent $1$ and merges $\calI_2$, and $\calI_3$. Then it applies the cut-and-choose among the remaining agents (the right-most picture).}
\label{fig:PROP:three}
\end{figure*}
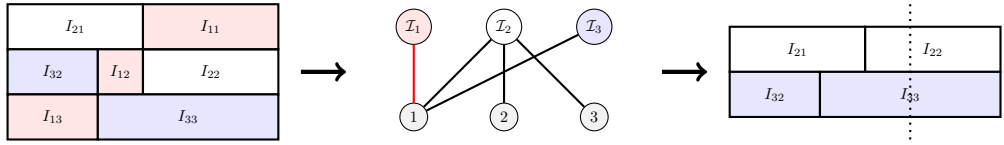

Equipped with Lemma \ref{lem:equitable}, we will prove Theorem \ref{thm:prop:feasible} by recursively computing an {\em envy-free matching} between $n$ agents and $n$ layered pieces where one agent has proportional fair share for every piece. Specifically, given a bipartite graph $G$ with one side being the set of agents and the other side being the set of items, an envy-free matching $M$ of $G$ is a matching where no unmatched agent {\em envies} some matched agent, i.e., no unmatched agent is adjacent to any matched item in $G$. The problem of finding an envy-free matching of maximum size can be solved in polynomial time \citep{EladErel,GAN2019}. Further, using Hall’s type condition, it can be easily shown that if there is one agent who is adjacent to every item and the number of agents is at most the number of items, then there is a non-empty envy-free matching (Corollary $1.4$ $($c$)$ of \citet{EladErel}).  

\begin{proof}[Proof of Thm. \ref{thm:prop:feasible}]
In order to obtain a proportional feasible multi-allocation, we will recursively compute a non-empty envy-free matching:
For each iteration, let one agent partition $n$-equally valued layered pieces, find a maximum envy-free matching between agents and pieces, and assign the matched agents to the matched pieces. By Corollary $1.4$ $($c$)$ of \citet{EladErel}, the envy-free matching computed at each step is non-empty; thus, at least one agent is matched and we will apply the same procedure to the unmatched agents and pieces. The formal description is given as follows. See Figure \ref{fig:PROP:three} for an illustration for $m=n=3$. 

\vspace{5pt}
\noindent\fbox{%
	\parbox{0.985\linewidth}{%
		\textbf{A protocol for proportional feasible multi-allocations for $n = m$ agents} over a $m$-layered cake $\calC$: \\
		\textit{Step 1.} One agent partitions the cake into $n$ non-overlapping layered contiguous pieces $\calI_1,\calI_2, \ldots, \calI_n$ which she considers of equal value, using the algorithm in the proof of Lemma \ref{lem:equitable}.\\
		\textit{Step 2.} Construct a bipartite graph $G$ with the agents being one side and the pieces being on the other side, where there is an edge between agent $i$ and $\calI_h$ if agent $i$ has value at least his proportional far share for $\calI_h$. \\
		\textit{Step 3.} Compute a maximum-size envy-free matching $M$ of $G$. Assign matched agent to the corresponding piece. \\
		\textit{Step 4.} For each unmatched piece $\calI_h$, merge all the disjoint contiguous pieces in $\calI_h$, and create a $m-\ell$-layered cake $\calC'$ consisting of each merge of $\calI_h$ where $\ell$ is the number of matched pieces. Apply the same protocol to $\calC'$ among the remaining unmatched agents. 
	}%
}
\vspace{5pt}

We will show by the induction on the number of agents $n=m$ that the resulting multi-allocation $\calA$ is proportional and feasible. The claim clearly holds for $n=m=1$. Suppose that the claim holds for $n$ with $m=n \leq k-1$; we will prove it for $m=n=k$. We will first show that the resulting multi-allocation $\calA$ is proportional. Clearly, the agents who get matched has proportional fair share for his bundle. Further, each $i$ of the remaining unmatched $n-\ell$ agents have value less than $\frac{1}{n}$ for each matched piece and thus has at least $1-\frac{\ell}{n}$ for the remaining unmatched pieces; thus, $V_i(\calA_i)$ is at least $\frac{1}{n}$. It can be easily verified that each bundle $\calA_i$ is non-overlapping. Each iteration requires $O(m^2)$ number of short eval queries and $O(m)$ number of long cut queries for the cutter, and $O(m^2)$ number of short eval queries for each agent. Further, the number of iterations is at most $n$, which proves the bound on the query complexity. 
This completes the proof.
\end{proof}

Similarly to the proof for Theorem \ref{thm:exponential}, we can generalize the above theorem to the case when the number of agents is strictly greater than the number of layers. 

\begin{theorem}\label{thm:prop:feasible:any}
A proportional complete multi-allocation that is feasible exists and can be computed using $O(nm^2)$ number of short eval queries and $O(nm)$ number of long cut queries, for any number $m$ of layers and any number $n \geq m$ of agents. 
\end{theorem}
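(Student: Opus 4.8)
The plan is to reduce the general case $n \geq m$ to the already-solved balanced case $n = m$ of Theorem~\ref{thm:prop:feasible}, peeling off one agent at a time by a moving-knife argument that mirrors the generalization of Theorem~\ref{thm:prop:base} carried out in Theorem~\ref{thm:exponential}. Concretely, I would set up a recursive algorithm $\calD$ that takes a set $N'$ of agents and an $m$-layered cake and, when $|N'| = m$, simply invokes the protocol of Theorem~\ref{thm:prop:feasible}; the key observation is that this base case already works for arbitrary $m$ (not only powers of two), so no power-of-two restriction is inherited. When $|N'| > m$, the algorithm peels one agent and recurses on $|N'| - 1$ agents over an $m$-layered cake in which a single layer has been shortened. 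The whole correctness argument is then an induction on $|N'|$, with the invariant that $\calD$ returns a feasible complete multi-allocation in which every agent values its bundle at least $\tfrac{1}{|N'|}$ of its own value for the input sub-cake; at the top level $|N'| = n$ this is exactly proportionality.

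For the peeling step, I would first locate a layer that carries enough value. Fixing any single agent and evaluating its value for each of the $m$ layers, normalization forces at least one layer $C_j$ with value $\geq \tfrac{1}{m} \geq \tfrac{1}{n}$ for that agent; assume $j=1$. I then slide a knife rightward over $C_1$ and let each agent shout the moment the left contiguous piece $Y$ first reaches value $\tfrac{1}{n}$; the first shouter $i$ is awarded $Y$, and $\calD$ recurses on $N' \setminus \{i\}$ over the cake with $C_1$ replaced by $C_1 \setminus Y$ (still an $m$-layered cake, since $Y$ is a leftmost contiguous slice and the remainder of $C_1$ stays contiguous). Because $i$ is the \emph{first} to shout, every other agent $i'$ values $Y$ at most $\tfrac{1}{n}$ at that instant, so the remaining cake is worth at least $\tfrac{|N'|-1}{|N'|}$ of $i'$'s value for the current sub-cake; dividing it among $|N'|-1$ agents, the induction hypothesis yields each such agent a bundle worth at least $\tfrac{1}{|N'|-1}\cdot\tfrac{|N'|-1}{|N'|} = \tfrac{1}{|N'|}$ of its value for the current sub-cake, and the shouter trivially gets $\tfrac{1}{|N'|}$. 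Telescoping down to the base case gives every agent value at least $\tfrac{1}{n}$ overall.

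Feasibility is immediate and requires no new idea: the peeled piece $Y$ lives entirely inside a single layer and is excised from that layer before recursing, so it cannot overlap any piece the recursion later hands out, while agent $i$'s bundle is a single interval and hence trivially non-overlapping; the recursion returns a feasible allocation of the shortened $m$-layered cake by the induction hypothesis, and re-attaching $Y$ preserves feasibility. The one part that genuinely needs care is the query-complexity accounting rather than the existence argument (which is essentially a replay of Theorem~\ref{thm:exponential}). I would charge $O(m)$ evaluations per round to locate the value-carrying layer, the cost of detecting the first shouter on that layer (the leftmost $\tfrac{1}{n}$-mark among the remaining agents), and the base-case cost of $O(nm^2)$ short eval and $O(nm)$ long cut queries from Theorem~\ref{thm:prop:feasible}. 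The main obstacle I anticipate is confirming that these per-round costs, summed over the at most $n-m$ peeling rounds and combined with the base case, collapse into exactly the claimed $O(nm^2)$ short-eval and $O(nm)$ long-cut budget.
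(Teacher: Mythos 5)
Your proposal is correct and matches the paper's own argument: the paper proves this theorem exactly by remarking that the peeling construction of Theorem~\ref{thm:exponential} (award a leftmost piece of value $\tfrac{1}{n}$ on a layer some agent values at least $\tfrac{1}{n}$ to the first shouter, then recurse) applies verbatim with the base case $n=m$ replaced by Theorem~\ref{thm:prop:feasible}. Your induction invariant and feasibility reasoning are the same as in that proof, so no genuinely different route is taken.
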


It remains open whether a proportional contiguous multi-allocation exists when the number of layers is three. A part of the reason is that our algorithm for finding an equitable multi-allocation (Lemma \ref{lem:equitable}) may not return a `balanced' partition: The number of pieces contained in each layered piece may not be the same when the number of layers is odd. For example, one layered piece may contain pieces from three different layers while the other two parts may contain pieces from two different layers, as depicted in Figure \ref{fig:PROP:three}.

\section{Discussion} \label{sec:discussion}
We initiated the study of multi-layered cake cutting, demonstrating the rich and intriguing mathematical feature of the problem. There are several exciting questions left open for future work. Below, we list some of them. 

\begin{itemize}
\item {\bf Existence of fair allocations}:
We have seen that an envy-free contiguous and feasible multi-allocation of a two-layered cake exists for two or three agents with at most two types of preferences. An interesting open problem is whether such allocation also exists for any number of agents over a two-layered cake. One might expect that the Simmon-Su's technique \citep{Su1999} using Sperner's Lemma can be adopted to our setting by considering all possible diagonal pieces. However, this approach may not work because multi-layered cake-cutting necessarily exhibits non-monotonicity in that the value of a pair of diagonal pieces may decrease when the knife moves from left to right. For proportionality, one intriguing future direction is extending our existence result for $m=2^a$ to any $m$. This requires careful consideration of contiguity and feasibility, which are often at odds with completeness.

\item {\bf Query complexity of fair allocations}:
The query complexity of finding an envy-free feasible multi-allocation is open in the multi-layered cake-cutting problem. In particular, it would be challenging to extend the celebrated result of \citet{aziz2016discrete} -- who showed the existence of a bounded protocol for computing an envy-free allocation of a single-layered cake with any number $n$ of agents --  to our setting. We expect that a direct translation may not work, due to the intricate nature of the feasibility constraint. With respect to proportionality, our existence proof implies that if there is a way to compute a majority switching point efficiently, one can compute a proportional contiguous feasible multi-allocation for special cases when the number of layers is a power of two. It is open whether such cutting point can be computed using a bounded number of queries.

\item {\bf Approximate fairness}: 
In the presence of contiguity requirement, it is known that no finite protocol computes an envy-free allocation even for three agents and a single-layered cake \citep{Stromquist2008}. However, several positive results are known when the aim is to approximately bound the envy between agents \citep{Deng2012,Goldberg2020,Arunachaleswaran19}. 
Pursuing a similar direction in the context of multi-layered cake cutting would be an interesting research topic. 

\item {\bf Efficiency requirement}: 
Besides fairness criteria, another basic desideratum is economic efficiency. In the context of a single-layered cake cutting, several works studied the relation between fairness and efficiency \citep{CohlerLPP11,Bei12, AumannDomb2010, AumannDH13}. 
The question of what welfare guarantee can be achieved together with fairness is open in our model. In particular, it would be interesting to investigate the compatibility of the fairness notions with an efficiency requirement, under feasibility constraints. 


\end{itemize}

\section*{Acknowledgement}
Hadi Hosseini was supported by the National Science Foundation (grant IIS-1850076).
Ayumi Igarashi was supported by the KAKENHI Grant-in-Aid for JSPS Fellows no. 18J00997 and JST, ACT-X. The authors would like to thank Yuki Tamura for introducing the problem to us and for the fruitful discussions. The authors also acknowledge the helpful comments by the GAIW and IJCAI reviewers. 
We are grateful to an anonymous GAIW reviewer for the proof of the existence of an envy-free multi-allocation in the general case.

\bibliographystyle{named}
\bibliography{abb,cakeref,cake}

\end{document}